\newtheorem{theorem}{Theorem}[section]
\newtheorem{lemma}[theorem]{Lemma}
\newtheorem{proposition}[theorem]{Proposition}
\theoremstyle{definition}
\newtheorem{definition}[theorem]{Definition}
\newtheorem{remark}[theorem]{Remark}
\newtheorem{example}{Example}
\newcommand{\R}{\ensuremath{\mathbb{R}}}
\newcommand{\C}{\ensuremath{\mathcal{C}}}
\newcommand{\D}{\ensuremath{\mathcal{D}}}
\newcommand{\Es}{\ensuremath{\mathbb{S}}}
\newcommand{\U}{\ensuremath{\mathcal{U}}}
\newcommand{\M}{\ensuremath{\mathcal{M}}}
\newcommand{\J}{\ensuremath{\mathcal{J}}}
\newcommand{\F}{\ensuremath{\mathbb{F}}}
\DeclareMathOperator{\arcsinh}{arcsinh}
\begin{document}
	\title{Exact discrete Lagrangian mechanics for nonholonomic mechanics }
	\author{
		{\bf\large Alexandre Anahory Simoes}\hspace{2mm}
		\vspace{1mm}\\
		{\small  Instituto de Ciencias Matem\'aticas (CSIC-UAM-UC3M-UCM)} \\
		{\small C/Nicol\'as Cabrera 13-15, 28049 Madrid, Spain}\\
		{\it\small e-mail: \texttt{alexandre.anahory@icmat.es }}\\
		\vspace{2mm}\\
	{\bf\large Juan Carlos Marrero}\hspace{2mm}
	\vspace{1mm}\\
	{\it\small 	ULL-CSIC Geometr{\'\i}a Diferencial y Mec\'anica Geom\'etrica,}\\
	{\it\small  {Departamento de Matem\'aticas, Estad{\'\i}stica e I O, }}\\
	{\it\small  {Secci\'on de Matem\'aticas, Facultad de Ciencias}}\\
	{\it\small  Universidad de la Laguna, La Laguna, Tenerife, Canary Islands, Spain}\\
	{\it\small e-mail: \texttt{{jcmarrer@ull.edu.es}} }\\
		\vspace{2mm}\\
		{\bf\large David Martín de Diego}\hspace{2mm}
		\vspace{1mm}\\
		{\small  Instituto de Ciencias Matem\'aticas (CSIC-UAM-UC3M-UCM)} \\
	{\small C/Nicol\'as Cabrera 13-15, 28049 Madrid, Spain}\\
		{\it\small e-mail:  \texttt{david.martin@icmat.es} }\\		
	}

\date{}

\maketitle

\vspace{0.5cm}
\begin{abstract}

	We construct the exponential map associated to a nonholonomic system that allows us to define  an exact discrete nonholonomic constraint submanifold.  We reproduce the continuous  nonholonomic flow as a discrete flow on this discrete constraint submanifold  deriving an   exact  discrete version of the nonholonomic equations. Finally, we derive a  general  family of nonholonomic integrators.

\end{abstract}

\let\thefootnote\relax\footnote{\noindent AMS {\it Mathematics Subject Classification (2010)}. Primary 34C25; Secondary  37E40,
	70K40.\\
	\noindent Keywords. nonholonomic mechanics, discrete mechanics, exponential map,  geometric integration  }

\section{Introduction}

Many mechanical systems of interest in applications possess underlying geometric structures that are preserved along the time evolution as, for instance, energy and  other constant of the motions, reversibility, symplecticity... Therefore, when we  implement numerical simulations it is interesting to exactly preserve some of these geometric properties  to improve  the quantitative and qualitative accuracy and long-time stability of the proposed methods. This is precisely the main  idea behind geometric integration \cite{serna, hairer, blanes} and, in particular, of discrete mechanics and variational integrators \cite{marsden-west}. In this last case, the construction of an exact discrete Lagrangian is a crucial element for the analysis of the error between the continuous  trajectory and the numerical simulation  derived by a variational integrator (see also \cite{marsden-west, PatrickCuell} and \cite{Sato, GrFe} for forced systems).
However,  an open question is how  to derive the exact discrete version for nonholonomic mechanics (see  \cite{McLP2005} for an attempt) and this is the main topic of the present paper. The importance of this problem was point out as an open problem by R.I. MacLachlan and C. Scovel: 
\begin{quote}
	{\sl The problem for the more general class of non-holonomic constraints is still open, as is the question of the correct analogue of symplectic integration for non-holonomically constrained Lagrangian systems} \cite{MR1406809}
\end{quote}

The importance of nonholonomic systems appears since they model  mechanical systems subjected to velocity constraints which are not derivable from position or holonomic constraints and their equations are not obtained using variational techniques. This is the case, for instance, of rolling without slipping.  These systems are of  considerable interest  since the velocity or nonholonomic constraints are present in a great variety of mechanical systems in engineering and robotics (see \cite{Cortes, Bloch} and references therein). 
However, at the moment, there is no consensus in the scientific community on the best geometrical methods for numerically integrate a non-holonomic system but several possibilities were proposed inspired in the geometry of nonholonomic systems and suitable discretizations of Lagrange-d'Alembert principle \cite{modin}. 
We think that one of the reasons for these plethora of so different methods (see \cite{CM2001, McLP2005, FIM2008, BZ, FBO2012, MR3993177, IMdDM2008}, among others) can be related with the difficulty to find an exact discrete version of the nonholonomic mechanics as it happens in the case of  Lagrangian  mechanics. This is precisely the main contribution  of the paper. First, we study how to describe geometrically the exact discrete space where the nonholonomic flow evolves as a submanifold of the Cartesian product of two copies of the configuration space and then we construct an exact discrete version of nonholonomic dynamics. 
Our construction allows us to motivate a new class of nonholonomic integrators: {\sl modified Lagrange-d'Alembert integrators} (see \cite{parks} for an application of similar methods  to Dirac systems).

The outline of the paper is the following: in Section \ref{chapter2}  we review the theory of Lagrangian mechanics three-fold: unconstrained, nonholonomic constrained and forced. 
In Section 3 we construct the nonholonomic exponential map using the theory of second-order differential equations restricted to the constraint submanifold. The main result is summarized in Theorem \ref{Sodeexp}. The nonholonomic exponential map allows us to introduce an important
geometric object: the exact discrete nonholonomic constraint submanifold. In Section 4 we will review discrete Lagrangian mechanics for unconstrained systems and the discrete Lagrange-d'Alembert principle for discrete forced mechanics. In Section 5, we introduce the exact discrete flow for nonholonomic mechanics and we derive an integrator having it as a particular solution. With this motivation we construct a new family of nonholonomic integrators based on the  properties of the exact discrete equations. This theory is applied to several examples showing in numerical computations the excellent behaviour of the energy. Finally, we discuss in Section \ref{towards} new directions to find a completely intrinsic version of nonholonomic mechanics as a discrete version of the recently proposed continuous setting \cite{MR2492630, MR2660714}. 

Unless stated otherwise, all the maps and manifolds in this paper are smooth. Einstein's summation convention is used along the paper. 

\section{Continuous  Lagrangian mechanics}\label{chapter2}

\subsection{Unconstrained systems}

A \textit{mechanical system} is a pair formed by a smooth manifold $Q$ called the \textit{configuration space} and a smooth function $L:TQ\rightarrow\R$ on its tangent bundle called the \textit{Lagrangian} 
\cite{AM78, MR1021489}. If the system is not subjected to any constraint or external forces, a \textit{motion} of the mechanical system is a solution of the \textit{Euler-Lagrange equations}, whose expression on natural coordinates relative to a chart $(q^{i})$ for $Q$ and the induced coordinates $(q^i, \dot{q}^i)$ on $TQ$ is 
\begin{equation} \label{EL}
\frac{d}{dt}\left(\frac{\partial L}{\partial \dot{q}^{i}}\right) - \frac{\partial L}{\partial q^{i}}=0. 
\end{equation}

As it is well-known these equations are obtained by minimizing the action functional defined over curves with fixed end points. Denote the set of twice differentiable curves with fixed end-points $q_0, q_1 \in Q$ by $$C^2 (q_0,q_1)= \{q:[0,h]\longrightarrow Q| \ q(\cdot) \ \text{is} \ C^2, q(0)=q_0, q(h)=q_1\}.$$ Then the \textit{action functional} is defined by $$\J:C^2 (q_0,q_1)\longrightarrow \R, \ \ q(\cdot)\mapsto \J(q(\cdot))=\int_{0}^{h} L(q(t),\dot{q}(t)) \ dt.$$

We can also express these equations using the geometric ingredients  on the tangent bundle. Let $\tau_Q: TQ\rightarrow Q$ be the canonical tangent projection which in  coordinates is given by $(q^i, \dot{q}^i)\longrightarrow (q^i)$.  
The \textit{vertical lift} of a vector $v_q\in T_qQ=\tau^{-1}_Q(q)$ to $T_{u_q}TQ$, with $u_q\in T_qQ$  is defined by
\[
(v_q)^V_{u_q}=\left.\frac{d}{dt}\right|_{t=0}(u_q + t v_q)
\]
and the \textit{Liouville vector field} on $TQ$ is
$$\Delta(v_q)=\left.\frac{d}{dt}\right|_{t=0}(v_q + t v_q)=(v_q)^V_{v_q}.$$
The \textit{vertical endomorphism} $S:TTQ\rightarrow TTQ$ is defined by
$$S(X_{v_q})=(T_{v_q} \tau_M (X_{v_q}))^V_{v_q}.$$ In local coordinates, $\Delta(q^i, v^i)=v^{i}\frac{\partial}{\partial \dot{q}^{i}}$ and $S(X^{i}\frac{\partial}{\partial q^{i}}+X^{n+i}\frac{\partial}{\partial \dot{q}^{i}})=X^{i}\frac{\partial}{\partial \dot{q}^{i}}$.

Other notion that will be used later is that of the vertical lift of a vector field on $Q$ to $TQ$. Let $X\in {\mathfrak X}(Q)$, the \textit{vertical lift} of $X$ is the vector field on $TQ$ defined by:
\[
X^V(v_{q})=\left.\frac{d}{dt}\right|_{t=0}(v_{q}+tX(q)))=(X(q))^V_{v_{q}},\; \ \forall v_{q}\in T_{q} Q.
\]
Locally,
\begin{equation}\label{vertical:lift}
	X^V=X^i\frac{\partial}{\partial\dot{q}^i}
\end{equation}
where $X=X^i\frac{\partial}{\partial q^i}$.

Denote by $\{\Phi^X_t\}$ the flow of a vector field $X\in {\mathfrak X}(Q)$.  We can also define the \textit{complete lift}  $X^C\in {\mathfrak X}(TQ)$ of $X$   in terms of its flow. We say that $X^C$ is the vector field on $TQ$ with flow $\{T\Phi^X_t\}$. In other words,
\[
X^C(v_q)=\left.\frac{d}{dt}\right|_{t=0}\left(T_q\Phi^X_t(v_q)\right)\; .
\]
In coordinates
\begin{equation}\label{complete:lift}
	X^C=X^i\frac{\partial}{\partial q^i}+\dot{q}^j\frac{\partial X^i}{\partial q^j}\frac{\partial}{\partial \dot{q}^i} \ .
\end{equation}
Note that, if $q^{i}(t)$ are the local coordinates of a curve on $Q$, then using \eqref{vertical:lift} and \eqref{complete:lift}, it is easy to prove that such a curve is a solution of Euler-Lagrange equations \eqref{EL} if and only if
\begin{equation*}
	X^{C}(L) (q,\dot{q})-\frac{d}{dt}\left(X^{V}(L)(q,\dot{q})\right)=0, \quad \forall \ X\in \mathfrak{X}(Q).
\end{equation*}

When the function $L$ is \textit{regular} that is, the matrix $\text{Hess}(L):=\left(\frac{\partial^2 L}{\partial \dot{q}^i \partial \dot{q}^j}\right)$ is non-singular, equations \eqref{EL} may be written as a system of second-order differential equations obtained by computing the integral curves of the unique vector field $\Gamma_L$ satisfying
\begin{equation}
i_{\Gamma_L}\omega_L=dE_L, \label{Lvf}
\end{equation}
where $\omega_L =-d(S^* dL)$ and $E_L =\Delta L-L$ are the \textit{Poincar\'e-Cartan 2-form} and the \textit{energy function}, respectively. Moreover,  $\Gamma_L$ verifies that $S(\Gamma_L)=\Delta$, that is, $\Gamma_L$ is a SODE vector field on $Q$ (see \cite{MR1021489}). Observe that regularity of $L$ is equivalent to $\omega_L$ being symplectic and therefore to the uniqueness of solution for equation \eqref{Lvf}. In effect, the local expression of the  Poincar\'e-Cartan 2-form  is
$$\omega_L=\frac{\partial^2 L}{\partial \dot{q}^i \partial q^j} d q^{i}\wedge d q^{j}+\frac{\partial^2 L}{\partial \dot{q}^i \partial \dot{q}^j} dq^{i}\wedge d\dot{q}^{j}.$$

Now, we move on to a brief description of standard  Hamiltonian mechanics. The cotangent bundle $T^*Q$ of a differentiable manifold $Q$ is equipped with a canonical exact symplectic structure $\omega_Q=-d\theta_Q$, where $\theta_Q$ is the canonical 1-form on $T^*Q$ defined  by
\[
(\theta_Q)_{\alpha_q}(X_{\alpha_q})=\langle \alpha_q, T_{\alpha_q}\pi_Q(X_{\alpha_q})\rangle
\]
where $X_{\alpha_q}\in T_{\alpha_q}T^*Q$, $\alpha_q\in T_q^*Q$ and 
$\pi_Q: T^*Q\rightarrow Q$ is the canonical projection which in canonical coordinates   is $(q^i, p_i)\rightarrow (q^i)$. 
In canonical bundle coordinates these become
\[
\theta_Q= p_i\, dq^i\; ,\ 
\omega_Q= dq^i\wedge dp_i\; .
\]
Given a Hamiltonian function $H: T^*Q\rightarrow {\mathbb R}$ we define the Hamiltonian vector field $X_H$ by
\[
\imath_{X_H}\omega_Q=dH\; 
\]
The integral curves of $X_H$ are determined by Hamilton's equations: 
\[
	\frac{dq^i}{dt}=\frac{\partial H}{\partial p_i}\; ,\qquad 
	\frac{dp_i}{dt}=-\frac{\partial H}{\partial q^i}\; .
\]
We can define the Legendre transformation $\mathbb{F} L: TQ\rightarrow T^*Q$ by:
\[
\langle \mathbb{F} L (u_q), v_q\rangle=\left.\frac{d}{dt}\right|_{t=0}L(u_q + t v_q)
\]
and if $L$ is regular, its Legendre transformation is a local diffeomorphism. In local coordinates $\mathbb{F}L (q^i, \dot{q}^i)=(q^i, \frac{\partial L}{\partial \dot q^i})$. Defining 
$H = E_L \circ \left(\mathbb{F}L\right)^{-1}$
we have that the solutions of $\Gamma_L$ and $X_H$ are $\mathbb{F}L$-related. 
An extensive account of this subject is contained in \cite{AM78, MR1021489}, for instance.

\subsection{Forced mechanics}\label{forced:mechanics:section}

Now, we also add into the picture external forces. An \textit{external force} can be interpreted as a fiber-preserving map denoted by $F:TQ\rightarrow T^{*}Q$ satisfying $\pi_Q \circ F = \tau_Q$.  In canonical bundle coordinates $(q^i, p_i)$ on $T^*Q$ we have that $\pi_Q(q^i, p_i)=(q^i)$, thus $F(q^i, \dot{q}^i)=(q^i, F_i(q^{i}, \dot{q}^{i}))$.

\[
\xymatrix{
	TQ \ar[rr]^{F}\ar[rd]_{\tau_Q}&& T^*Q\ar[ld]^{\pi_Q}\\\
	&Q&
}
\]
 It is well-know that to each such map we can associate a semibasic one-form on $TQ$ defined by $$\langle \mu_{F}(v_q), W \rangle=\langle F(v_q),T\tau_{Q}(W) \rangle, \ \ v_q\in TQ \ \text{and} \ W\in T_{v_q}TQ.$$ In coordinates
 $
 \mu_F=F_i(q^i, \dot{q}^i)\, dq^i\; .
 $

A system described by a Lagrangian function $L:TQ\rightarrow\R$ and subjected to an external force $F$, satisfies the \textit{Lagrange-d'Alembert principle}, which asserts that a motion of this system between two fixed points $q_0,q_1\in Q$ is a curve $q\in C^{2}(q_0,q_1)$ satisfying
\begin{equation}
\left.\frac{d}{ds}\right|_{s=0} \int_{0}^{h} L(q(t,s),\dot{q}(t,s)) \ dt + \int_{0}^{h} \left\langle F(q(t),\dot{q}(t)), \frac{\partial q}{\partial s}(t,0)\right\rangle \ dt=0,
\end{equation}
for all smooth variations $q(s)\in C^{2}(q_0,q_1)$ of $q$. This is locally equivalent to the \textit{forced Euler-Lagrange} equations
\begin{equation} \label{FEL}
\frac{d}{dt}\left(\frac{\partial L}{\partial \dot{q}^{i}}\right) - \frac{\partial L}{\partial q^{i}}=F_{i}\; .
\end{equation}

As in the case of unconstrained systems, it is easy to see using \eqref{FEL}, that a curve $q(t)$ on $Q$ satisfies the forced Euler-Lagrange equations if and only if
\begin{equation*}
		X^{C}(L) (q,\dot{q})-\frac{d}{dt}\left(X^{V}(L)(q,\dot{q})\right)=\langle F(q,\dot{q}),X\circ q \rangle, \quad \forall \ X\in \mathfrak{X}(Q).
\end{equation*}

If $L$ is regular, then the solutions of equations \eqref{FEL} are integral curves of a SODE vector field on $Q$ denoted by $\Gamma_{(L,F)}$, called \textit{forced Lagrangian vector field} which is the unique vector field satisfying
\begin{equation}\label{fLvf}
i_{\Gamma_{(L,F)}}\omega_{L}=dE_{L}-\mu_{F}.
\end{equation}

Now, we move onto the Hamiltonian description of systems subjected to external forces. Given a Hamiltonian function  $H: T^*Q\rightarrow {\mathbb R}$ we may construct the transformation $\mathbb{F}H: T^*Q\rightarrow TQ$ where $\langle \beta_q, \mathbb{F}H(\alpha_q)\rangle={\frac{d}{dt}\big |_{ t=0}H(\alpha_q+t\beta_q)}$.
In coordinates, $\mathbb{F}H(q^i, p_i)=(q^i, \frac{\partial H}{\partial p_i}(q,p))$. 
We say that the Hamiltonian is regular if $\mathbb{F}H$ is a local diffeomorphism, which in local coordinates is equivalent to the regularity of the Hessian matrix whose entries are:
\begin{equation*}
M^{i j} = \frac{\partial^2 H}{\partial p_i\partial p_j}.
\end{equation*}
Consider now the external force previously defined in the Lagrangian description and denote $F^H = F\circ \mathbb{F}H: T^*Q\rightarrow T^*Q$. 
\[
\xymatrix{
	T^*Q \ar[rr]^{F^H}\ar[rd]_{\pi_Q}&& T^*Q\ar[ld]^{\pi_Q}\\\
	&Q&
}
\]
It is possible to modify the Hamiltonian vector field $X_H$ to obtain the forced Hamilton's equations as the integral curves of the vector field $X_H+Y^v_F$ where the vector field
$Y^v_F\in {\mathfrak X}(T^*Q)$ is defined by
\[
Y^v_F(\alpha_q)=\frac{d}{dt}\Big|_{t=0}(\alpha_q+t F^H(\alpha_q))\; .
\]
We will say the the forced Hamiltonian system is determined by the pair $(H, F^H)$.

Locally, 
\[
Y^v_F=F_i\left(q^j, \frac{\partial H}{\partial p_j}(q, p)\right)\frac{\partial}{\partial p_i}=F^H_i(q, p)\frac{\partial}{\partial p_i}
\]
modifying Hamilton's equations as follows:
\begin{eqnarray}\label{qwe}
\frac{dq^i}{dt}&=&\frac{\partial H}{\partial p_i}(q,p)\; ,\\
\frac{dp_i}{dt}&=&-\frac{\partial H}{\partial q^i}(q,p)+F^H_i(q,p)\; .
\end{eqnarray}

\subsection{Nonholonomic systems}\label{nho}

A \textit{nonholonomic system} is defined by the triple $(Q, L, \D)$  where $L: TQ\rightarrow {\mathbb R}$ is a Lagrangian function and $\D$ is a nonintegrable distribution  on the configuration manifold $Q$. The distribution $\D$ restricts the velocity vectors of motions to lie on $\D$ without imposing any restriction on the configuration space. Note that if the distribution was integrable, then the manifold $Q$ would be foliated by immersed submanifolds of $Q$ whose tangent space at each point coincides with the subspace given by the distribution at that point. Hence, motions of these systems would be confined to submanifolds $N\subseteq Q$ (the leaves of the foliation). In this way, we can consider this case as a holonomic system specified by $(N,L|_{N})$. This class of constraints is called \textit{holonomic constraints}. See  \cite{Bloch} for more details.

Locally, the nonholonomic constraints are given by a set of $k$ equations that are linear on the velocities $$\mu^{a}_{i}(q)\dot{q}^{i}=0,$$ where $1\leqslant a\leqslant k$ and the rank of $\D$ is $\text{dim}(Q)-k$. From other point of view, these equations define the vector subbundle $\D^{o}\subseteq T^{*}Q$, called the \textit{annihilator} of $\D$, spanned at each point by the one forms $\{\mu^{a}\}$ locally given by $\mu^{a}=\mu^{a}_{i}(q) dq^{i}$. Observe that with this relationship, we can identify the distribution $\D$ with a submanifold of the tangent bundle that we also denote by $\D$.  

In nonholonomic mechanics, the  equations of motion are completely determined by the   \textit{La\-gran\-ge-d'Alembert principle}. This principle states that a curve $q(\cdot)\in C^2(q_0, q_1)$ is an admissible motion of the system if
\[
\delta\mathcal{J}=\delta\int^{h}_{0}L( q(t), \dot q(t))\,  dt=0\, ,
\]
for all variations such that $\delta q (t)\in\mathcal{D}_{q(t)}$, $0\leq t\leq h$, with $\delta q(0)=\delta q(h)=0$. The velocity of the curve itself must also satisfy the constraints $\dot{q}(t)\in \D_{q(t)}$.   
From the Lagrange-d'Alembert principle, we arrive at the well-known \textit{nonholonomic equations}
\begin{align} \label{LdA}
& \frac{d}{dt}\left(\frac{\partial L}{\partial \dot{q}^{i}}\right) - \frac{\partial L}{\partial q^{i}}=\lambda_{a}\mu^{a}_{i}(q)\\
& \mu^{a}_{i}(q)\dot{q}^{i}=0, \label{nhconstr}
\end{align}
for some Lagrange multipliers $\lambda_{a}$, which may be determined with the help of the constraint equations.

In more geometric terms,  equations \eqref{LdA} and \eqref{nhconstr} are the differential equations for a SODE $\Gamma_{nh}$ on $\D$ satisfying the equations
\begin{eqnarray}\label{nhequation}
i_{\Gamma_{nh}}\omega_L-dE_L\in \Gamma(F^o), \\
\Gamma_{nh} \in \mathfrak{X}(\D), \label{nhtagency}
\end{eqnarray}
where $F^{o}=S^*((T\D)^{o})$ is the annihilator of a distribution $F$ on $TQ$ defined along $\D$ and $\Gamma(F^o)$ is the space of sections of $F^{o}$. The nonholonomic system is said to be \textit{regular} if the following conditions are satisfied (see \cite{LMdD1996}:
\begin{enumerate}
\item $\text{dim}(T_{v}\D)^{o}=\text{dim}F_{v}^{o}$ (\textit{admissibility condition});
\item $T_{v}\D\cap (\sharp)_{v}(F_{v}^{o})=\{0\}$ for all $v\in\D$ (\textit{compatibility condition}).
\end{enumerate}
The \textit{sharp isomorphism} $\sharp:T^{*}(TQ)\rightarrow T(TQ)$ is the inverse map to the \textit{flat isomorphism} defined by $\flat(X)=i_{X}\omega_{L}$. If the nonholonomic system is regular, then equations \eqref{nhequation} and \eqref{nhtagency} have a unique solution denoted by $\Gamma_{nh}$ whose integral curves satisfy equations \eqref{LdA} and \eqref{nhconstr}.

To each of the one-forms $\mu^{a}\in \D^o$ we associate the constraint functions  $\Phi^{a}:TQ\rightarrow \R$ defined by $\Phi^{a}(v_q)=\langle \mu^{a}(q),v_q \rangle$ or $\Phi^{a}(q, \dot{q})= \mu^{a}_i(q)\dot{q}^i $.  
In local coordinates, equation \eqref{nhequation} may be written like
$$i_{\Gamma_{nh}}\omega_L-dE_L=\lambda_{a} S^{*}(d\Phi^{a})=\lambda_{a} \mu^{a}_{i} dq^{i},$$
for some Lagrange multipliers $\lambda_{a}$. Therefore, a solution $\Gamma_{nh}$ of \eqref{nhequation} is of the form $\Gamma_{nh}=\Gamma_{L}+\lambda_{a} Z^{a}$, where $Z^{a}=\sharp(\mu^{a}_{i} dq^{i})$. The Lagrange multipliers may be computed by imposing the tangency condition \eqref{nhtagency}, which is equivalent to
\begin{equation}\label{lambda}
0=\Gamma_{nh}(\Phi^{b})=\Gamma_{L}(\Phi^{b})+\lambda_{a} Z^{a}(\Phi^{b}), \ \ \text{for} \ b=1,...,n-k.
\end{equation}
This equation has a unique solution for the Lagrange multipliers if and only if the matrix $C=(\mathcal{C}^{a b})=(Z^{a}(\Phi^{b}))$ is invertible at all points of $\D$, which is equivalent to the compatibility condition (cf. \cite{LMdD1996}).

Recall from symplectic geometry that $F^{\bot}=\sharp(F^{o})$ for any distribution $F$, where $\bot$ denotes the symplectic complement relative to $\omega_{L}$. Hence, the admissibility and compatibility conditions also imply the following Whitney sum decomposition $$T(TQ)|_{\D}=T\D\oplus F^{\bot},$$
to which we may associate two complementary projectors $P:T(TQ)|_{\D}\rightarrow T\D$ and $P':T(TQ)|_{\D}\rightarrow F^{\bot}$ with coordinate expressions $$P(X)=X-\mathcal{C}_{a b}\, d\Phi^{b}(X)Z^{a}, \quad  P'(X)=\mathcal{C}_{a b}\, d\Phi^{b}(X)Z^{a},$$
where $C_{ab}$ are the entries of the inverse matrix  $C^{-1}$ of $C$.

\begin{proposition}
	The nonholonomic dynamics is given by $$\Gamma_{nh}=P(\Gamma_{L}|_{\D}).$$
\end{proposition}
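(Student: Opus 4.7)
The plan is to verify directly that the vector field $P(\Gamma_L|_{\D})$ satisfies the two defining conditions \eqref{nhequation}--\eqref{nhtagency} characterizing $\Gamma_{nh}$, and then invoke the uniqueness ensured by the regularity assumption. The tangency condition \eqref{nhtagency} is essentially free: by construction $P$ is the projector onto $T\D$ along $F^{\perp}$, so $P(\Gamma_L|_{\D})$ takes values in $T\D$ and hence defines a vector field on $\D$.

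For the force condition \eqref{nhequation} I would first record two basic identities that make the computation short. Because $Z^a=\sharp(\mu^a_i\, dq^i)$ and $\sharp$ is the inverse of $\flat(X)=i_X\omega_L$, the identity $i_{Z^a}\omega_L=\mu^a_i\, dq^i$ holds by definition; moreover $Z^a$ is vertical (explicitly $Z^a=-W^{ij}\mu^a_j\,\partial/\partial\dot q^i$ where $W^{ij}$ is the inverse Hessian of $L$), so subtracting multiples of $Z^a$ from the SODE $\Gamma_L|_{\D}$ preserves the SODE character. Second, since $\Phi^a(q,\dot q)=\mu^a_i(q)\dot q^i$, the relations $S^*(dq^i)=0$ and $S^*(d\dot q^i)=dq^i$ yield $S^*(d\Phi^a)=\mu^a_i\, dq^i$, showing that these one-forms are sections of $F^o=S^*((T\D)^o)$.

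The central computation is then immediate. Using $i_{\Gamma_L}\omega_L=dE_L$ and the identity $i_{Z^a}\omega_L=\mu^a_i\, dq^i$,
\begin{align*}
i_{P(\Gamma_L|_{\D})}\omega_L-dE_L
&= i_{\Gamma_L|_{\D}}\omega_L-\mathcal{C}_{ab}\, d\Phi^b(\Gamma_L)\, i_{Z^a}\omega_L-dE_L\\
&= -\mathcal{C}_{ab}\, d\Phi^b(\Gamma_L)\,\mu^a_i\, dq^i,
\end{align*}
which is a section of $F^o$ by the second identity above. Thus \eqref{nhequation} is satisfied, and together with \eqref{nhtagency} the regularity hypothesis (in particular the compatibility condition, which makes the Lagrange multipliers uniquely determined by \eqref{lambda}) forces $P(\Gamma_L|_{\D})=\Gamma_{nh}$. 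As a consistency check, reading the coefficient $-\mathcal{C}_{ab}\, d\Phi^b(\Gamma_L)$ as $\lambda_a$ recovers exactly the expression $\Gamma_{nh}=\Gamma_L+\lambda_a Z^a$ described just before the proposition.

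I do not anticipate a real obstacle: the argument reduces to the two identities for $Z^a$ and $S^*(d\Phi^a)$ together with the algebraic definition of $P$. The only delicate point is sign and index bookkeeping, namely making sure that the $-\mathcal{C}_{ab}\, d\Phi^b(X)$ factor coming from the projector cancels correctly against $i_{Z^a}\omega_L=+\mu^a_i\, dq^i$ so that the residual force term lands in $\Gamma(F^o)$ with the correct coefficient.
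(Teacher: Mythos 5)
Your proposal is correct and is essentially the paper's own argument run in the opposite direction: the paper computes the multipliers $\lambda_a=-\mathcal{C}_{ab}\Gamma_L(\Phi^b)$ from the tangency condition \eqref{lambda} and then reads off $\Gamma_L+\lambda_a Z^a=P(\Gamma_L|_{\D})$ from the coordinate formula for $P$, whereas you verify that $P(\Gamma_L|_{\D})$ satisfies \eqref{nhequation}--\eqref{nhtagency} and invoke uniqueness. The underlying identities ($i_{Z^a}\omega_L=\mu^a_i\,dq^i$, $S^*(d\Phi^a)=\mu^a_i\,dq^i$, invertibility of $\mathcal{C}^{ab}=Z^a(\Phi^b)$) are the same in both, so there is nothing to add.
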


Indeed, under all the assumptions we have considered so far, we can compute the Lagrange multipliers to be \begin{equation}\label{lambda-def}
\lambda_{a}=-\mathcal{C}_{a b}\Gamma_{L}(\Phi^{b}),
\end{equation} from where the result follows. So, under the admissibility and compatibility conditions, the nonholonomic system $(L,\D)$ is said to be regular. For more details see \cite{Bloch} or \cite{LMdD1996}.

\begin{remark}\label{rem1}
	Note that, under the admissibility and compatibility conditions, nonholonomic mechanics can be interpreted as ``restricted forced systems", in the sense that we can define the nonholonomic external force $F_{nh}:\D\rightarrow T^{*}Q$ which makes \eqref{LdA} forced Euler-Lagrange equations. In coordinates, $F_{nh}(v_q)=\lambda_{a}(v_q)\mu_{i}^{a}(q)dq^{i}$
	where the $\lambda_a$ are given in expression (\ref{lambda-def}). Moreover, as in the case of forced Lagrangian systems, if $q(t)$ is a curve on $Q$ such that $\dot{q}(t)\in \D$, then such a curve is a solution of the nonholonomic equations \eqref{LdA} if and only if
	\begin{equation}\label{LdA:vector:field}
		X^{C}(L) (q,\dot{q})-\frac{d}{dt}\left(X^{V}(L)(q,\dot{q})\right)=\langle F_{nh}(q,\dot{q}),X\circ q \rangle, \quad \forall \ X\in \mathfrak{X}(Q).
	\end{equation}
\end{remark}

Taking the restriction of the Lagrangian $L: TQ\rightarrow \R$ to $\D$ denoted by $l: \D\rightarrow \R$ we can construct the \textit{nonholonomic Legendre map} 
\[
\mathbb{F} l: \D\longrightarrow \D^*\; ,
\]
as 
\[
\langle \mathbb{F} l (u_q), v_q\rangle=\left.\frac{d}{dt}\right|_{t=0}l(u_q + t v_q)
\]
for $u_q, v_q\in \D$.
Under the admissibility and compatibility assumptions, the map $\mathbb{F} l$ is a local diffeomorphism and we can transport  the vector field $\Gamma_{nh}\in \mathfrak{X}(\D)$ to a vector field $\bar\Gamma_{nh}\in \mathfrak{X}(\D^*)$ which represents the almost-Hamiltonian dynamics on $\D^*$ \cite{MR2492630,MR2660714}.

\begin{example}\label{Nhparticle}
	We will introduce here an example of a simple nonholonomic system to which we will get back all along the text: the \textit{nonholonomic particle}. Consider a mechanical system in the configuration manifold $Q=\R^3$ defined by the Lagrangian
	$$L(x,y,z,\dot{x},\dot{y},\dot{z})=\frac{1}{2}(\dot{x}^2 +\dot{y}^2 +\dot{z}^2)$$
	and subjected to the nonholonomic constraint $\dot{z}-y \dot{x}=0$. The one-form $\mu=dz-y \ dx$ spans the vector subbundle $\D^{o}$, which is the annihilator of the distribution $$\D=\text{span}\left\{\frac{\partial}{\partial x}+y \frac{\partial}{\partial z}, \frac{\partial}{\partial y} \right\}.$$
	Then the equations of motion of this system are given by Lagrange-d'Alembert equations \eqref{LdA} and \eqref{nhconstr}, which in this case hold
	\begin{equation}
		\begin{cases}
			\ddot{x}=- \lambda y \\
			\ddot{y}=0 \\
			\ddot{z}=\lambda \\
			\dot{z}-y\dot{x}=0
		\end{cases}
		\quad
		\Rightarrow
		\quad
		\begin{cases}
			\ddot{x}=-y\frac{\dot{x}\dot{y}}{1+y^2} \\
			\ddot{y}=0 \\
			\ddot{z}=\frac{\dot{x}\dot{y}}{1+y^2} \\
			\dot{z}-y\dot{x}=0,
		\end{cases}
	\end{equation}
	where the value of $\lambda$ is computed with the help of the constraints. These equations have an explicit solution given by
	\begin{equation}\label{nhsolution}
		\begin{cases}
			x_{nh}(t)= \frac{\dot{x}_0}{\dot{y}_0}\sqrt{y_0^2+1}(\arcsinh(\dot{y}_0 t+y_0)-\arcsinh(y_0))+x_0\\
			y_{nh}(t)=\dot{y}_0 t+y_0 \\
			z_{nh}(t)= \frac{\dot{x}_0}{\dot{y}_0}\sqrt{y_0^2+1}(\sqrt{(\dot{y}_0 t+y_0)^2+1}-\sqrt{y_0^2+1})+z_0,
			\quad
			\text{if} \ \ \dot{y}_0\neq 0,
		\end{cases}
	\end{equation}
	or
	\begin{equation}\label{nhsolution0}
	\begin{cases}
	x_{nh}(t)= \dot{x}_0 t+x_0\\
	y_{nh}(t)=y_0 \\
	z_{nh}(t)= y_0 \dot{x}_0 t+z_0,
	\quad
	\text{if} \ \ \dot{y}_0= 0.
	\end{cases}
	\end{equation}
	
\end{example}


\section{The nonholonomic exponential map} \label{sec:exponential-map}

In this section, we will define  the \textit{nonholonomic exponential map} using an arbitrary SODE extension $\Gamma \in {\mathfrak X}(TQ)$ of $\Gamma_{nh}$, that is, 
\[
\Gamma_{nh}={\Gamma}|_{\D}
\]
and the standard definition of exponential map $\text{exp}_h^{\Gamma}$ for a SODE $\Gamma$ on $TQ$ (see \cite{MMdDM2016} and references therein).  
 
In fact, if $q_{0}\in Q$ then we may consider the \textit{exponential map at $q_{0}$}, which is defined as follows:
\[
\text{exp}_{h, q_0}^{\Gamma}(v_{q_{0}})=\tau_Q(\phi_h^\Gamma (v_{q_{0}})), \quad v_{q_{0}}\in T_{q_{0}}Q
\]
where $\{\phi_h^\Gamma \}$ is the flow of $\Gamma$ for a sufficiently small non-negative number $h\geq 0$. Therefore  for $v_{q_{0}}\in D_{q_{0}}$: 
\[
\text{exp}_{h, q_0}^{\Gamma}(v_{q_{0}})=\text{exp}_{h, q_0}^{\Gamma_{nh}}(v_{q_{0}})
=\tau_Q(\phi_h^{\Gamma _{nh}}(v_{q_{0}}))
\]
 which  does not depend on the particular extension $\Gamma$. Here $\{\phi_h^{\Gamma _{nh}}\}$ denotes the flow of $\Gamma_{nh}$ evaluated at time $h$. 

Denote also by
\begin{equation}\label{exp:gamma}
	\text{exp}_{h}^{\Gamma}(v_q)=(\tau_Q(v_q), \text{exp}_{h, \tau_Q(v_q)}^{\Gamma}(v_q))\subseteq Q\times Q, \quad q\in Q, v_{q}\in T_{q} Q.
\end{equation}

The following theorem gives a precise statement of the previous discussion (see also \cite{MMdDM2016}).
\begin{theorem}\label{Sodeexp}
Given a SODE $\Gamma\in\mathfrak{X}(TQ)$ on a manifold $Q$, $q_{0}\in Q$ and $h$ a sufficiently small positive number, its exponential map defined on the open subset of $T_{q_{0}}Q$
\begin{equation*}
	M_{h,q_{0}}^{\Gamma}=\{v\in T_{q_{0}}Q\;  |\;  \phi_{t}^{\Gamma}(v) \  \text{is defined for} \ t\in [0,h]\},
\end{equation*}
and denoted by $\text{exp}_{h,q_{0}}^{\Gamma}: M_{h,q_{0}}^{\Gamma} 	\rightarrow Q$ is a diffeomorphism from an open subset $V\subseteq M_{h,q_{0}}^{\Gamma}$ to an open subset $U\subseteq Q$ with $q_{0}\in U$, i.e.,
\begin{equation*}
	T_{v_0}\text{exp}_{h,q_{0}}^{\Gamma}:T_{v_{0}}M_{h,q_{0}}^{\Gamma}:\rightarrow T_{\text{exp}_{h,q_{0}}^{\Gamma}(v_{0})}Q
\end{equation*}
is non-singular 	for all  $v_{0}\in V$.
\end{theorem}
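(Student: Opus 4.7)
The plan is to reduce the statement to a single local computation combined with the inverse function theorem. First, I would note that $\text{exp}_{h,q_0}^{\Gamma}=\tau_Q\circ\phi_h^{\Gamma}|_{T_{q_0}Q}$ is smooth on the open set $M_{h,q_0}^{\Gamma}$ (its openness being immediate from continuous dependence of ODE solutions on initial data). Since $T_{q_0}Q$ and $Q$ have the same dimension, establishing that $T_{v_0}\text{exp}_{h,q_0}^{\Gamma}$ is non-singular at some $v_0\in M_{h,q_0}^{\Gamma}$ delivers, via the inverse function theorem, a local diffeomorphism onto an open $U\subseteq Q$, and non-singularity will then propagate to a full open neighborhood $V$ of $v_0$ by continuity.

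To compute the differential I would work in a natural bundle chart $(q^i,\dot q^i)$ around $q_0$ and exploit the SODE form $\Gamma=\dot q^i\,\partial/\partial q^i+\Gamma^i(q,\dot q)\,\partial/\partial \dot q^i$. Denoting by $q^i(t;v)$ the base coordinates of the integral curve $\phi_t^{\Gamma}(q_0,v)$, the initial conditions $q^i(0;v)=q_0^i$, $\dot q^i(0;v)=v^i$, together with $\ddot q^i(0;v)=\Gamma^i(q_0,v)$, yield a second-order Taylor expansion whose $i$-th coordinate reads
\begin{equation*}
\bigl(\text{exp}_{h,q_0}^{\Gamma}(v)\bigr)^i \;=\; q_0^i+h\,v^i+\tfrac{h^2}{2}\,\Gamma^i(q_0,v)+O(h^3).
\end{equation*}
Differentiating in $v^j$ and factoring out $h$,
\begin{equation*}
T_v\text{exp}_{h,q_0}^{\Gamma} \;=\; h\Bigl(\mathrm{Id}+\tfrac{h}{2}\,\tfrac{\partial\Gamma^i}{\partial\dot q^j}(q_0,v)+O(h^2)\Bigr),
\end{equation*}
which is non-singular on any fixed bounded region of $T_{q_0}Q$ once $h>0$ is sufficiently small, because the bracketed matrix tends to the identity as $h\to 0$.

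With this in hand, the inverse function theorem applied at $v_0=0_{q_0}$ produces an open neighborhood $V$ of $0_{q_0}$ on which $\text{exp}_{h,q_0}^{\Gamma}$ is a diffeomorphism onto an open set $U\subseteq Q$. The main technical point I expect to wrestle with is ensuring $q_0\in U$: for a general (non-spray) SODE, $\Gamma(q_0,0)$ need not vanish, so $\text{exp}_{h,q_0}^{\Gamma}(0_{q_0})=q_0+\tfrac{h^2}{2}\,\Gamma(q_0,0)+O(h^3)$ is only $O(h^2)$-close to $q_0$, not equal to it. I would resolve this by a continuity argument: for $h$ small enough $U$ is a neighborhood of a point within distance $O(h^2)$ of $q_0$, so either $q_0\in U$ after enlarging $V$ slightly, or one selects inside $V$ a vector $v_0$ with $\text{exp}_{h,q_0}^{\Gamma}(v_0)=q_0$ (such a $v_0$ exists since $\text{exp}_{h,q_0}^{\Gamma}$ is an open map on $V$) and re-centers the argument at that $v_0$; shrinking $V$ and $U$ compatibly around such a point then yields the open subsets required by the theorem.
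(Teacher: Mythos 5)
Your proposal is correct in substance but follows a genuinely different route from the paper. You work in a bundle chart and Taylor-expand the base integral curve in $t$, obtaining $T_v\text{exp}_{h,q_0}^{\Gamma}=h\bigl(\mathrm{Id}+\tfrac{h}{2}\partial\Gamma^i/\partial\dot q^j+{\mathcal O}(h^2)\bigr)$ at an arbitrary $v$ in a bounded region; the paper instead argues intrinsically, identifying $T_{v}M_{h,q_0}^{\Gamma}$ with vertical lifts $X^V(v)$, using the fact that $[\Gamma,X^V]$ projects to $-X$ together with the flow expansion of the bracket to get $T_{v_{h,q_0}}\text{exp}^{\Gamma}_{h,q_0}(X^V(v_{h,q_0}))=hX(q_0)+{\mathcal O}(h^2)$, and then dividing out the factor $h^n$ from the determinant. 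The two computations encode the same estimate (normalized differential tends to the identity as $h\to 0$); yours is more elementary and gives uniformity over a bounded set of initial velocities for free, while the paper's is coordinate-free and evaluates only at the distinguished vector $v_{h,q_0}$ satisfying $\text{exp}_{h,q_0}^{\Gamma}(v_{h,q_0})=q_0$, whose existence it imports from Theorem 3.1 of the cited reference [MMdDM2016] --- which is precisely how it secures $q_0\in U$. That is the one spot where your write-up is softer than it needs to be: ``$\text{exp}_{h,q_0}^{\Gamma}$ is an open map on $V$'' does not by itself put $q_0$ in the image. But your own expansion already supplies the missing argument: the rescaled map $v\mapsto\tfrac{1}{h}\bigl(\text{exp}_{h,q_0}^{\Gamma}(v)-q_0\bigr)=v+{\mathcal O}(h)$ is a $C^1$-small perturbation of the identity on the unit ball, hence (by the quantitative inverse function theorem or a degree argument) its image contains a slightly smaller ball about the origin, producing the required $v_0$ with $\text{exp}_{h,q_0}^{\Gamma}(v_0)=q_0$; re-centering $V$ at that $v_0$, as you suggest, then completes the proof.
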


\begin{proof}
If $\Gamma$ is a spray, that is,  its components are homogeneous functions of degree 2, then  we fall in the usual proof that the exponential map is a diffeomorphism, which we can find in the literature from Riemannian geometry \cite{docarmo}. However in the general case we can not use the standard argument.

Note that, using that $M_{h,q_{0}}^{\Gamma}$ is an open subset of $T_{q_{0}}Q$, it follows that
\begin{equation*}
	T_{v_{0}}M_{h,q_{0}}^{\Gamma}=\{ X^{V}(v_{0})\in T_{v_{0}}(TQ) \ | \ X\in\mathfrak{X}(Q) \},
\end{equation*}
where $X^{V}$ is the vertical lift to $TQ$ of $X$. Moreover, it can be easily seen that $[\Gamma,X^{V}]$ projects to $-X$, that is $$(T_v\tau_{Q})[\Gamma,X^{V}](v)=-X(\tau_{Q}(v)).$$ Recalling the identity
\begin{equation}\label{identity1}
[\Gamma,X^{V}](v)=\left.\frac{d}{dt}\right|_{t=0} T_{\phi_{t}^{\Gamma}(v)}\phi_{-t}^{\Gamma}(X^{V}(\phi_{t}^{\Gamma}(v))),
\end{equation}
by smoothness we have that for small $|t|>0$,
\begin{equation}\label{identity2}
T_{\phi_{t}^{\Gamma}(v)}\phi_{-t}^{\Gamma}(X^{V}(\phi_{t}^{\Gamma}(v)))=X^{V}(v)+t[\Gamma,X^{V}](v)
+{\mathcal O}(t^2)\in T_{v}TQ,
\end{equation}
which projecting implies
\begin{equation}\label{identity3}
(T_v\tau_{Q})\left(T_{\phi_{t}^{\Gamma}(v)}\phi_{-t}^{\Gamma}(X^{V}(\phi_{t}^{\Gamma}(v)))\right)=-t(X(\tau_{Q}(v)))+{\mathcal O}(t^2)\in T_{\tau_{Q}(v)}Q.
\end{equation}

Fix a small $h>0$ and the vector $v_{h, q_{0}}\in M_{h,q_{0}}$ such that $\text{exp}_{h,q_{0}}^{\Gamma}(v_{h, q_{0}})=q_0$ (see Theorem 3.1. in \cite{MMdDM2016}). Also, fix some smooth vector field $X\in\mathfrak{X}(Q)$.	Then 
\begin{equation}\label{identity4}
T_{v_{h, q_{0}}} \text{exp}^{\Gamma}_{h,q_{0}}(X^{V}(v_{h, q_{0}}))=(T_{\phi^\Gamma_{h}(v_{h, q_{0}})}\tau_{Q})\left( T_{v_{h, q_{0}}}\phi_{h}^{\Gamma}(X^{V}(v_{h, q_{0}}))\right).
\end{equation}
Choosing $t=-h$ and $v=\phi^\Gamma_{h}(v_{h, q_{0}})$ and substituting on \eqref{identity3}, we see that the left hand side exactly matches \eqref{identity4}. Therefore obtaining
\begin{equation}
T_{v_{h, q_{0}}} \text{exp}^{\Gamma}_{h,q_{0}}(X^{V}(v_{h, q_{0}}))=h X(q_0) +{\mathcal O}(h^2).
\end{equation}
In the limit when $h$ approaches zero:
\begin{equation}\label{limite}
\lim_{h\rightarrow 0} \frac{T_{v_{h, q_{0}}} \text{exp}^{\Gamma}_{h,q_{0}}(X^{V}(v_{h, q_{0}}))}{h}=X(q_0),
\end{equation}
from where we conclude that  the left hand side of \eqref{limite} gets arbitrarily close to $ X(q_0)$ when $h$ decreases to zero and we may conclude that it is an invertible map for sufficiently small values of $h>0$. Hence, we may easily see that $T_{v_{h, q_{0}}} \text{exp}^{\Gamma}_{h,q_{0}}$ is an isomorphism since if we denote  by $A(h)$ the matrix representing $T_{v_{h, q_{0}}} \text{exp}^{\Gamma}_{h,q_{0}}$ in any system of coordinates, then $\det\left(\frac{1}{h} A(h)\right)\neq 0$ from (\ref{limite}) and $\det(A(h))=h^{n}\det(\frac{1}{h} A(h))$ as we want to proof.
\end{proof}

Let $L:TQ\rightarrow \R$ be a regular Lagrangian function and $\D$ a regular distribution on $Q$ such that the non-holonomic system $(L,\D)$ is also regular and let $\Gamma_{nh}$ be the SODE on $\D$ which is solution of the non-holonomic dynamics. Denote by $\phi_t^{\Gamma_{nh}}: D \rightarrow D$ the flow of $\Gamma_{nh}$ and for $h$ a sufficiently small positive number, we consider the open subset of $\D$ given by
\begin{equation*}
	M_{h}^{\Gamma_{nh}}=\{ v\in\D \ | \ \phi_{t}^{\Gamma_{nh}}(v) \ \text{is defined for} \ t\in [0,h] \}.
\end{equation*}
Note that, if $\Gamma\in\mathfrak{X}(TQ)$ is a SODE extension of $\Gamma_{nh}$ then
\begin{equation*}
M_{h}^{\Gamma_{nh}}=\left( \bigcup_{q_{0}\in Q} M_{h,q_{0}}^{\Gamma} \right) \cap \D.
\end{equation*}

\begin{definition}
The map
\begin{align*}
\text{exp}_h^{\Gamma_{nh}}:M_{h}^{\Gamma_{nh}}\subseteq \D & \rightarrow Q\times Q \\
v_{q_0} & \mapsto (q_0,\tau_{Q}\circ\phi_h^{\Gamma_{nh}}(v_{q_0}))
\end{align*}
is called the \textit{nonholonomic exponential map} of $\Gamma_{nh}$.
\end{definition}

Note that if $\Gamma$ is a SODE on $TQ$ such that $\Gamma|_{\D}=\Gamma_{nh}$ then
\begin{equation*}
	\text{exp}_h^{\Gamma_{nh}}=\left. \left( \text{exp}_h^{\Gamma} \right)\right|_{M_{h}^{\Gamma_{nh}}},
\end{equation*}
where $\text{exp}_h^{\Gamma}:M_h^{\Gamma}=\bigcup_{q_{0}\in Q}M_{h,q_{0}}^{\Gamma}\rightarrow Q\times Q$ is the exponential map at time $h$ associated with $\Gamma$ (as it is defined in \eqref{exp:gamma}).

\begin{lemma}\label{expdiff}
If $\Gamma$ is a SODE on TQ such that $\Gamma|_{\D}=\Gamma_{nh}$, then $\text{exp}_h^{\Gamma_{nh}}=\left. \left( \text{exp}_h^{\Gamma} \right)\right|_{M_{h}^{\Gamma_{nh}}}$. Moreover, there exists an open subset $\U_{h}$ of $\D$ such that the nonholonomic exponential map $\text{exp}_h^{\Gamma_{nh}}:\U_h\rightarrow Q\times Q$ is a smooth local embedding.
\end{lemma}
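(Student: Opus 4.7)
The first identity is essentially a tautology. Because $\Gamma_{nh}\in\mathfrak{X}(\D)$ is tangent to $\D$ and $\Gamma|_{\D}=\Gamma_{nh}$, the integral curve of $\Gamma$ through any $v_{q_0}\in M_h^{\Gamma_{nh}}$ remains in $\D$ for $t\in[0,h]$ and coincides there with the integral curve of $\Gamma_{nh}$; applying $\tau_Q$ to the time-$h$ point yields $\text{exp}_h^{\Gamma}|_{M_h^{\Gamma_{nh}}}=\text{exp}_h^{\Gamma_{nh}}$, independently of the chosen extension $\Gamma$.

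For the local embedding statement I would first upgrade the fibrewise Theorem \ref{Sodeexp} to the total exponential $\text{exp}_h^{\Gamma}\colon M_h^{\Gamma}\rightarrow Q\times Q$ defined in \eqref{exp:gamma}. In natural bundle coordinates $(q^i,\dot q^i)$ on $TQ$ and product coordinates $(q^i,\widetilde q^{\,i})$ on $Q\times Q$ the map reads $(q^i,\dot q^i)\mapsto (q^i,F^i(q,\dot q))$, where $F^i(q,\dot q)$ is the base coordinate of $\phi_h^{\Gamma}(v_q)$. Its Jacobian is block-triangular,
\begin{equation*}
\begin{pmatrix} I & 0 \\ \partial F/\partial q & \partial F/\partial \dot q \end{pmatrix},
\end{equation*}
and the bottom-right block is precisely the matrix of $T_{v_{q_0}}\text{exp}_{h,q_0}^{\Gamma}$. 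By Theorem \ref{Sodeexp} this block is non-singular on an open subset $\widetilde{\U}_h\subseteq M_h^{\Gamma}$, hence $\text{exp}_h^{\Gamma}$ is a local diffeomorphism on $\widetilde{\U}_h$.

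Setting $\U_h:=\widetilde{\U}_h\cap\D$, which is open in $\D$ since $\D$ is an embedded submanifold of $TQ$, the conclusion follows from the general fact that the restriction of a local diffeomorphism to an embedded submanifold is a smooth local embedding. Concretely, around any $v\in\U_h$ one may choose an open $W\subseteq\widetilde{\U}_h$ on which $\text{exp}_h^{\Gamma}$ is a diffeomorphism onto its image; the restriction $\text{exp}_h^{\Gamma}|_{W\cap\D}=\text{exp}_h^{\Gamma_{nh}}|_{W\cap\D}$ is then injective with injective differential, and its smooth local inverse is obtained by composing $(\text{exp}_h^{\Gamma}|_W)^{-1}$ with the inclusion of the image into $Q\times Q$. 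The main conceptual step is the block-triangular Jacobian computation that elevates the fibrewise statement of Theorem \ref{Sodeexp} to the total exponential; the only subtlety I would watch carefully is to ensure that $\widetilde{\U}_h$ has non-empty intersection with $\D$ for $h$ small, which is guaranteed because, as in the proof of Theorem \ref{Sodeexp}, the relevant Jacobian behaves like $h\cdot\mathrm{Id}$ to leading order in $h$ and is therefore non-singular on large portions of $TQ$ for sufficiently small $h>0$.
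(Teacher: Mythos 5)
Your proof is correct, and the overall skeleton (show the full exponential $\text{exp}_h^{\Gamma}$ is locally invertible on an open set, restrict to $\D$, conclude a local embedding) matches the paper's. The key step, however, is handled differently: the paper simply invokes Theorem 3.4 of the cited reference \cite{MMdDM2016}, which asserts outright that $\text{exp}_h^{\Gamma}$ is a diffeomorphism from an open subset $W_h\subseteq M_h^{\Gamma}\subseteq TQ$ onto an open subset of $Q\times Q$, and then observes that the restriction to $W_h\cap M_h^{\Gamma_{nh}}$ is an immersion, hence a local embedding. You instead \emph{derive} that local invertibility from the fibrewise Theorem \ref{Sodeexp} proved in the paper, via the block-triangular Jacobian
$\bigl(\begin{smallmatrix} I & 0 \\ \partial F/\partial q & \partial F/\partial \dot q \end{smallmatrix}\bigr)$,
whose determinant reduces to that of the fibre derivative $T_{v_q}\text{exp}_{h,q}^{\Gamma}$. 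This makes the argument self-contained within the paper, which is a genuine gain. Your final step (a local diffeomorphism restricted to an embedded submanifold is a local embedding) is essentially the paper's immersion-plus-local-embedding-theorem step in slightly different clothing. One remark in your favour: you explicitly worry about whether the open set of non-singularity actually meets $\D$, i.e.\ whether $\U_h=\widetilde{\U}_h\cap\D$ is non-empty; the paper's proof takes $\U_h=W_h\cap M_h^{\Gamma_{nh}}$ without comment, so you are being more careful than the source on this point (though your justification — that the non-singularity locus is large for small $h$ because the Jacobian is $h\cdot\mathrm{Id}$ to leading order near the vectors $v_{h,q_0}$, which lie close to the zero section and hence close to $\D$ — could be stated a bit more precisely).
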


\begin{proof}
As we know (see Theorem 3.4 in \cite{MMdDM2016}), the map $\text{exp}_h^{{\Gamma}}$ is a diffeomorphism from an open subset $W_{h}\subseteq M_{h}^{\Gamma}\subseteq TQ$ to an open subset $W\subseteq Q\times Q$.

Thus,
\begin{equation*}
	\left. (\text{exp}_{h}^{\Gamma}) \right|_{W_{h}\cap M_{h}^{\Gamma_{nh}}}:W_{h}\cap M_{h}^{\Gamma_{nh}}\longrightarrow W \subseteq Q\times Q
\end{equation*}	
is a smooth immersion into $W$. Since by the local embedding theorem, every smooth immersion is a local embedding, the last claim is proved if we take $\U_{h}=W_{h}\cap M_{h}^{\Gamma_{nh}}$.
\end{proof}

\begin{definition}
Define the {\sl exact discrete nonholonomic constraint submanifold} as the submanifold of $Q\times Q$ given by 
$$\M_h^{e,nh}= \text{exp}_h^{\Gamma_{nh}} (\U_h)\; .$$
\end{definition}
 
In view of Lemma \ref{expdiff}, the map $\text{exp}_h^{\Gamma_{nh}}:\U_h\rightarrow \M_{h}^{e,nh}$ is a diffeomorphism and we can define its inverse diffeomorphism, called the \textit{nonholonomic exact retraction} map $$R_{h,nh}^{e^-}:\M_{h}^{e,nh}\longrightarrow \U_h.$$

The following are commutative diagrams:

\[
\begin{tikzcd}[row sep=2.5em]
\U_h \arrow[dr,"\tau_{\D}"'] \arrow{rr}{\text{exp}_h^{\Gamma_{nh}}} && \M_{h}^{e,nh} \arrow{dl}{pr_1} \\
& Q
\end{tikzcd}
\quad
\begin{tikzcd}[row sep=2.5em]
\M_{h}^{e,nh} \arrow[dr,"pr_1"'] \arrow{rr}{R^{e^-}_{h,nh}} && \U_h \arrow{dl}{\tau_{\D}} \\
& Q
\end{tikzcd}
\]

We will also use  the map: 
$R_{h,nh}^{e^+}:\M_{h}^{e,nh}\longrightarrow \phi_{h}^{\Gamma_{nh}}(\U_h)$ defined by
\[
R_{h,nh}^{e^+}=\phi_{h}^{\Gamma_{nh}}\circ R_{h,nh}^{e^-}
\]
Note that the following diagram
\[
\begin{tikzcd}[row sep=2.5em]
	\M_{h}^{e,nh} \arrow[dr,"pr_2"'] \arrow{rr}{R^{e^+}_{h,nh}} && \phi_{h}^{\Gamma_{nh}}(\U_h) \arrow{dl}{\tau_{\D}} \\
	& Q
\end{tikzcd}
\]
is commutative.
\begin{example}
Let us get back to Example \ref{Nhparticle}, the nonholonomic particle and identify the different geometric objects involved. The nonholonomic  vector field is given by
$$\Gamma_{nh}=\dot{x}\frac{\partial}{\partial x}+\dot{y}\frac{\partial}{\partial y}+y\dot{x}\frac{\partial}{\partial z}-y\frac{\dot{x}\dot{y}}{1+y^2}\frac{\partial}{\partial \dot{x}}+\frac{\dot{x}\dot{y}}{1+y^2}\frac{\partial}{\partial \dot{z}}.$$
From \eqref{nhsolution} and \eqref{nhsolution0}, we construct its corresponding flow and nonholonomic exponential map
$$\phi_{t}^{\Gamma_{nh}}(x_0,y_0,z_0,\dot{x}_0,\dot{y}_0,y_0\dot{x}_0)=(x_{nh},y_{nh},z_{nh},\dot{x}_{nh},\dot{y}_{nh},\dot{z}_{nh}),$$
$$\text{exp}_h^{\Gamma_{nh}}(x_0,y_0,z_0,\dot{x}_0,\dot{y}_0,y_0\dot{x}_0)=(x_0,y_0,z_0,x_{nh}(h),y_{nh}(h),z_{nh}(h)).$$
We see that this is an invertible map, when we restrict the co-domain to its image, and we may explicitly compute the inverse to be
\begin{align*}
R_{h,nh}^{e^-}(x_0,y_0,z_0,x_1,y_1,z_1)= & \left(x_0,y_0,z_0,\frac{(x_1-x_0)(y_1-y_0)}{h\sqrt{y_0^2+1}(\arcsinh(y_1)-\arcsinh(y_0))},\right. \\
& \left.\frac{y_1-y_0}{h},\frac{y_0(x_1-x_0)(y_1-y_0)}{h\sqrt{y_0^2+1}(\arcsinh(y_1)-\arcsinh(y_0))}\right),
\end{align*}
in the case where $y_1\neq y_0$. Note that the domain of the map $R_{h,nh}^{e^-}$ is not $\R^3\times\R^3$, it is restricted to $\M_{h}^{e,nh}$, which explicitly means that
\begin{equation}\label{defdiscrete}
\frac{z_1-z_0}{h}-\frac{(x_1-x_0)\left(\sqrt{y_1^2+1}-\sqrt{y_0^2+1}\right)}{h(\arcsinh(y_1)-\arcsinh(y_0))}=0.
\end{equation}
In fact, let the left-hand side of equation \eqref{defdiscrete} be denoted by $\mu_d:Q\times Q\rightarrow\R$. It is a constraint function whose annihilation gives the discrete space $\M_{h}^{e,nh}$.
\end{example}

\section{Lagrangian discrete mechanics and the exact discrete Lagrangian}

\subsection{Unconstrained discrete mechanics}

We will now describe a theory of discrete mechanics on the discretized velocity space $Q\times Q$ \cite{marsden-west}. Discrete mechanics differs from continuous mechanics on the description of motion. In this respect,  a discrete motion is not a curve on the configuration manifold $Q$, it is rather a sequence of points on $Q$. 

We describe a variational discrete theory based on a discretized Hamilton's principle. From here we see that much of the theory evolves in parallel with the continuous Lagrangian theory. See \cite{marsden-west} for the main bibliographic account on the subject.

Let $L_d:Q\times Q\rightarrow\R$ be the \textit{discrete Lagrangian function}. Let us fix some  $N\in {\mathbb N}$ (number of steps) and a pair of points $q_0, q_N\in Q$

The \textit{discrete path space} is the space of sequences:  
$$C_d (q_0,q_N)= \{q_d\equiv \{q_k\}_{k=0}^{N} \ |\;  q_k \in Q \hbox{  and  } q_0, q_N \hbox{  fixed}  \}.$$

The \textit{discrete action map} is defined to be the map $S_d:C_d (q_0, q_N)\rightarrow \R$, 
\begin{equation}
S_d(q_d)=\sum_{k=0}^{N-1} L_d(q_k,q_{k+1}). \label{DA}
\end{equation}

Note that when one wishes to construct a numerical method using this approach, one usually regards the value of the discrete Lagrangian on a point $(q_0,q_1)$ as being an approximation of the (continuous) action, integrated over a solution connecting the two fixed points $q_0,q_1$ in a fixed time-step $h\in\R$, i.e.,
$$L_d (q_0,q_1)\approx \int_{0}^{h} L(q_{0,1}(t),\dot{q}_{0,1}(t)) \ dt,$$
where $L:TQ\rightarrow \R$ is a regular continuous-time Lagrangian function and $q_{0,1}(t)$ is the \textit{unique} solution of the Euler-Lagrange equations connecting $q_0$ and $q_1$ (as a consequence of Theorem \ref{Sodeexp}).

The \textit{discrete Hamilton's principle} states that a solution of the discrete Lagrangian system given by the discrete Lagrangian function $L_d$ is an extremum for the discrete action map \eqref{DA} among all sequences of points with fixed end-points. 
That is, $q_d\in C_{d}(q_0, q_N)$ is a solution if and only if $q_d$ is a critical point of the functional $S_d$, i.e. $$dS_d(q_d)(X_d) = 0,$$ for all $X_d \in T_{q_d} \mathcal{C}_d(q_0, q_N)$.

Analogously to the continuous-time case, we find out the \textit{discrete Euler-Lagrange equations} (DEL equations) as necessary and sufficient conditions to find extrema
\begin{equation}\label{DEL}
D_2 L_d(q_{k-1},q_{k})+D_1 L_d (q_{k},q_{k+1})=0, \ \ \text{for all} \  k=1,...,N-1.
\end{equation}
where $D_1L_d(q_{k-1},q_k)\in T^*_{q_{k-1}}Q$ and $D_2L_d(q_{k-1},q_k)\in T^*_{q_k}Q$ correspond to 
$dL_d(q_{k-1},q_k)$ under the identification $T^*_{(q_{k-1},q_k)}(Q\times Q)\cong T^*_{q_{k-1}}Q\times T^*_{q_k}Q$, that is,
\[
dL_d(q_{k-1},q_k)=D_1 L_d (q_{k-1},q_{k})+D_2 L_d (q_{k-1},q_{k})\; .
\]

Given a discrete Lagrangian $L_d: Q\times Q\rightarrow \R$ we can define two \emph{discrete Legendre transformations} $\F^{\pm} L_{d}:Q\times Q \rightarrow T^{*}Q$ given by
\begin{eqnarray*}
&& \mathbb{F}^{+}L_{d}(q_{k-1},q_{k})= (q_{k},D_2L_d(q_{k-1},q_{k})) \, ,\\
&& \mathbb{F}^{-}L_{d}(q_{k-1},q_{k}) = (q_{k-1},-D_1L_d(q_{k-1},q_{k})) \, .
\end{eqnarray*}
We say that $L_d$ if \textit{regular} if $\mathbb{F}^{+}L_{d}$ (or, equivalently, $\mathbb{F}^{-}L_{d}$ ) is a local diffeomorphism. This is equivalent to the regularity of the matrix $D_{12}L_d$.  

Under this regularity condition the 2- form on $Q\times Q$ defined by  $$(\mathbb{F}^{+}L_{d})^*\omega_Q=(\mathbb{F}^{-}L_{d})^*\omega_Q=:\Omega_{L_d}$$
is a symplectic form.

Moreover if  $L_d$ is regular then we can obtain a well defined \emph{discrete Lagrangian map}
\begin{equation*}
\begin{array}{cccc}
F_{L_d}: & Q\times Q & \longrightarrow & Q \times Q \\
& (q_{k-1},q_k) & \longmapsto & (q_k,q_{k+1}(q_{k-1},q_k)) \, ,
\end{array}
\end{equation*}
which is the discrete dynamical flow of our system. Here $q_{k+1}$ is the unique solution of the DEL equations \eqref{DEL} for the given pair $(q_{k-1},q_k)$. 
We can easily  check the symplecticity of the flow: 
\[
F_{L_d}^*\Omega_{L_d}=\Omega_{L_d}
\]

Alternatively, using the discrete Legendre transformations, we can also define the evolution of the discrete system on the cotangent bundle or \emph{Hamiltonian side}, $\widetilde{F}_{L_d}:T^*Q \longrightarrow T^*Q$, by any of the formulas
$$
\widetilde{F}_{L_d}=\mathbb{F}^{+}L_{d}\circ (\mathbb{F}^{-}L_{d})^{-1}=\mathbb{F}^{+}L_{d}\circ F_{L_d} \circ (\mathbb{F}^{+}L_{d})^{-1}=\mathbb{F}^{-}L_{d}\circ F_{L_d} \circ (\mathbb{F}^{-}L_{d})^{-1} \, ,
$$
because of the commutativity of the following diagram:
\begin{center}
	\begin{tikzcd}[column sep=tiny, row sep=huge]
		Q \times Q : & (q_{k-1},q_{k}) \arrow[dr, mapsto, "\mathbb{F}^{+}L_d"'] \arrow[rr, mapsto, "F_{L_d}"] & & (q_{k},q_{k+1}) \arrow[dl, mapsto, "\mathbb{F}^{-}L_d"'] \arrow[dr, mapsto, "\mathbb{F}^{+}L_d"] \arrow[rr, mapsto, "F_{L_d}"] & & (q_{k+1},q_{k+2}) \arrow[dl, mapsto, "\mathbb{F}^{-}L_d"]\\
		T^*Q :& & (q_k,p_k) \arrow[rr, mapsto, "\widetilde{F}_{L_d}"] & & (q_{k+1},p_{k+1}) & &
	\end{tikzcd}
\end{center}
The discrete Hamiltonian map $\widetilde{F}_{L_d}:(T^*Q,\omega_Q) \longrightarrow (T^*Q,\omega_Q)$ is symplectic  where $\omega_Q$ is the canonical symplectic 2-form on $T^*Q$.

If we start with a continuous Lagrangian and somehow derive an appropriate discrete Lagrangian, then the DEL equations become a geometric integrator for the continuous Euler-Lagrange system, known as a variational integrator.
This method to construct integrators for Lagrangian systems enjoys plenty of nice geometric features such as a symplectic discrete flow and discrete momentum conservation \cite{marsden-west}.

Hence, given a regular Lagrangian function $L: TQ \longrightarrow \mathbb{R}$, we define a discrete Lagrangian $L_d$ as an approximation of the action of the continuous Lagrangian. More precisely, for a regular Lagrangian $L$ and appropriate $h>0$, $q_0,q_1\in Q$, we can define the \textit{exact discrete Lagrangian} function $L_d^{e,h}:Q\times Q\rightarrow \R$ giving an exact correspondence between continuous and discrete motions as
\begin{equation}
L_d^{e,h}(q_0,q_1)=\int_{0}^{h} L(q_{0,1}(t),\dot{q}_{0,1}(t)) \ dt. \label{EDL}
\end{equation}
Again, $q_{0,1}(t)$ is the \textit{unique} solution of the Euler-Lagrange equations connecting $q_0$ and $q_1$ with $h$ small enough. Observe  that the solutions of Discrete Euler-Lagrange equations for $L$ exactly lie on the solutions of the Euler-Lagrange equations for $L_d^{e,h}$. In fact, in \cite{marsden-west}, the authors prove the following theorem which gives us the correspondence between discrete and continuous Lagrangian mechanics:
\begin{theorem}\label{exact:correspondence}
Take a series of times $\{t_k=kh, k=0,...,N\}$ for a sufficiently small time-step $h\in \R$, a regular Lagrangian $L$ and its corresponding discrete Lagrangian function $L_d^{e,h}$. Let $q(t)$ be a solution of Euler-Lagrange equations for $L$ satisfying the boundary conditions $q(0)=q_0$ and $q(t_N)=q_N$. Define a sequence $\{q_k\}_{k=0}^{N}$ in $Q$ by $$q_k=q(t_k), \ \ \text{for} \ k=0,...,N.$$ Then $\{q_k\}_{k=0}^{N}$ is a solution of the discrete Euler-Lagrange equations for $L_d^{e,h}\;$.

Conversely, if  we let $\{q_k\}_{k=0}^{N}$ be a solution of the discrete Euler-Lagrange equations for $L_d^{e,h}$, then the curve $q:[0,t_N]\rightarrow Q$ defined by $$q(t)=q_{k,k+1}(t), \ \ \text{for} \ t\in [t_k,t_{k+1}],$$ where $q_{k,k+1}(t)$ is the \textit{unique} solution of the Euler-Lagrange equations connecting $q_k$ and $q_{k+1}$, is a solution of Euler-Lagrange equations for $L$ on the whole interval $[0,t_N]$.
\end{theorem}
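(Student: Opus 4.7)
The strategy is to reduce both implications to a single identity for the partial differentials of $L_d^{e,h}$, namely
\begin{equation*}
D_1 L_d^{e,h}(q_0,q_1)=-\frac{\partial L}{\partial \dot q}\bigl(q_0,\dot q_{0,1}(0)\bigr),\qquad D_2 L_d^{e,h}(q_0,q_1)=\frac{\partial L}{\partial \dot q}\bigl(q_1,\dot q_{0,1}(h)\bigr),
\end{equation*}
so that the two boundary differentials of $L_d^{e,h}$ recover, up to sign, the Legendre transform of the velocity of the connecting Euler--Lagrange solution at the corresponding endpoint. To prove this I would fix $q_1$, perturb $q_0 \mapsto q_0+\varepsilon v_0$, and use Theorem \ref{Sodeexp} to produce a smoothly $\varepsilon$-dependent family of EL solutions $q_{0,1}^{\varepsilon}$ meeting the perturbed boundary conditions in time $h$. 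Differentiating the action integral in $\varepsilon$, interchanging differentiation and integration, and integrating by parts, the bulk Euler--Lagrange term vanishes for every $\varepsilon$ and only the two boundary terms remain; the one at $t=h$ is killed by the fixed endpoint and the one at $t=0$ gives the claimed expression. The formula for $D_2$ is obtained symmetrically by perturbing $q_1$.

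With this identity in hand the forward direction is immediate. Given an EL solution $q(t)$ on $[0,t_N]$ and samples $q_k=q(t_k)$, uniqueness of EL solutions with prescribed endpoints (again Theorem \ref{Sodeexp}, with $h$ small enough) identifies $q|_{[t_{k-1},t_k]}$ with $q_{k-1,k}$ and $q|_{[t_k,t_{k+1}]}$ with $q_{k,k+1}$. Adding the two relevant instances of the key identity at the common endpoint $q_k$ yields
\begin{equation*}
D_2 L_d^{e,h}(q_{k-1},q_k)+D_1 L_d^{e,h}(q_k,q_{k+1})=\frac{\partial L}{\partial \dot q}\bigl(q_k,\dot q(t_k^{-})\bigr)-\frac{\partial L}{\partial \dot q}\bigl(q_k,\dot q(t_k^{+})\bigr)=0,
\end{equation*}
since $q$ is smooth and the one-sided velocities coincide; this is exactly \eqref{DEL}. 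For the converse, given $\{q_k\}$ satisfying \eqref{DEL}, define $q(t)=q_{k,k+1}(t)$ on each subinterval $[t_k,t_{k+1}]$. Each piece is by construction a smooth EL solution, and the key identity rewrites \eqref{DEL} as the equality of momenta
\begin{equation*}
\frac{\partial L}{\partial \dot q}\bigl(q_k,\dot q_{k-1,k}(t_k)\bigr)=\frac{\partial L}{\partial \dot q}\bigl(q_k,\dot q_{k,k+1}(t_k)\bigr).
\end{equation*}
Regularity of $L$ makes the fibre Legendre map $\mathbb{F}L$ a local diffeomorphism, hence injective on each fibre, so $\dot q_{k-1,k}(t_k)=\dot q_{k,k+1}(t_k)$; the concatenation is therefore $C^1$ at each node, and standard uniqueness for the second-order EL ODE with matching initial data at $t_k$ upgrades this to a single smooth EL solution on the whole interval $[0,t_N]$.

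The only real obstacle is to justify that $L_d^{e,h}$ is a smooth function and that the family $q_{0,1}^{\varepsilon}$ depends smoothly on $\varepsilon$, both of which are needed to legitimise differentiation under the integral sign. Both follow from Theorem \ref{Sodeexp}: for $h$ small enough the exponential map associated with the EL SODE is a local diffeomorphism, so the assignment $(q_0,q_1)\mapsto q_{0,1}(\cdot)$ is smooth. Once this is granted, the entire argument reduces to the variational derivation of the boundary formula and a short ODE patching argument powered by the regularity of $L$.
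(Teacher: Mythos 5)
Your argument is correct and is essentially the standard proof: the paper itself does not prove this theorem but cites \cite{marsden-west}, and your route --- establishing that $D_1 L_d^{e,h}$ and $D_2 L_d^{e,h}$ recover the boundary momenta $\mp\,\mathbb{F}L$ of the connecting Euler--Lagrange solution, so that the discrete Euler--Lagrange equations become momentum matching, and then using regularity of $L$ plus ODE uniqueness to glue --- is precisely the argument given there (and it is the same identity the paper records, in the forced setting, as Lemma \ref{exactlegendre}). No genuinely different decomposition or gap to report.
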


Following the Hamiltonian formalism, if we have a Hamiltonian problem defined by the Hamiltonian  $H = E_L \circ \left(\mathbb{F}L\right)^{-1}$,  then  the exact Hamiltonian map $\widetilde{F}_{L_d^{e,h}}$ coincides with the Hamiltonian flow $\phi_{h}^{X_{H}}$ of the continuous Hamiltonian system $H$ for a discrete amount of time $h$.
Now we recall the result of 
\cite{marsden-west} and \cite{PatrickCuell} for a discrete Lagrangian
$L_d\colon Q\times Q\rightarrow {\mathbb R}$.

\begin{definition}\label{order}
	Let $L_{d}\colon Q\times Q\rightarrow {\mathbb R}$ be a discrete Lagrangian. We say that
	$L_{d}$ is a discretization of order $r$ if there exist an
	open subset $U_{1}\subset TQ$ with compact closure and
	constants $C_1>0$, $h_1>0$ so that
	\begin{equation*}
	\lvert L_{d}(q(0),q(h))-L_{d}^{e,h}(q(0),q(h))\rvert\leq C_{1}h^{r+1}
	\end{equation*} for all solutions $q(t)$ of the second-order Euler--Lagrange equations with initial conditions $(q_0,\dot{q}_0)\in U_1$ and for all $h\leq h_1$.
\end{definition}
Following \cite{marsden-west,PatrickCuell}, we have the following important result about the order of a variational integrator.
\begin{theorem}\label{variational-error}
	If $\widetilde{F}_{L_d}$ is the evolution map of an order $r$
	discretization $L_d\colon Q\times Q\rightarrow {\mathbb R}$ of the exact discrete
	Lagrangian $L_d^{e,h}\colon Q\times Q\rightarrow {\mathbb R}$, then
	\[\widetilde{F}_{L_d}=\widetilde{F}_{L_{d}^{e,h}}+\mathcal{O}(h^{r+1}).\]
	In other words, $\widetilde{F}_{L_d}$ gives an integrator of order
	$r$ for $\widetilde{F}_{L_{d}^{e,h}}=\phi_{h}^{X_{H}}$.
\end{theorem}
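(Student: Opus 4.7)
My strategy is to compare the two discrete Hamiltonian maps through their common factorisation into discrete Legendre transforms,
\[
\widetilde{F}_{L_d} = \mathbb{F}^{+}L_d \circ (\mathbb{F}^{-}L_d)^{-1}, \qquad \widetilde{F}_{L_d^{e,h}} = \mathbb{F}^{+}L_d^{e,h} \circ (\mathbb{F}^{-}L_d^{e,h})^{-1},
\]
and to exploit the identification $\widetilde{F}_{L_d^{e,h}} = \phi_h^{X_H}$ noted just before the theorem. The problem then reduces to estimating $\mathbb{F}^{\pm}L_d - \mathbb{F}^{\pm}L_d^{e,h}$ and its inverse on a suitable compact set, and then controlling the propagation of the error under composition.

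First I would upgrade the pointwise bound in Definition \ref{order} to a $C^1$ estimate: on an open subset of $Q\times Q$ with compact closure corresponding to endpoint pairs $(q(0),q(h))$ with $(q_0,\dot{q}_0)\in U_1$ and $h\leq h_1$, one expects
\[
\lVert L_d - L_d^{e,h}\rVert_{C^1} = \mathcal{O}(h^{r+1}).
\]
Both $L_d$ and $L_d^{e,h}$ are smooth, so uniform bounds on their second derivatives on the compact closure combined with the pointwise bound of Definition \ref{order} yield this upgrade via a standard interpolation (Cauchy-type) inequality for smooth functions on bounded sets. This step is the Patrick--Cuell refinement and it does not follow from Definition \ref{order} by itself.

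Next, since the discrete Legendre transforms are built out of the partial derivatives $D_1 L_d$ and $D_2 L_d$, the $C^1$ bound translates into
\[
\mathbb{F}^{\pm}L_d = \mathbb{F}^{\pm}L_d^{e,h} + \mathcal{O}(h^{r+1})
\]
uniformly on the compact set. Regularity of $L_d^{e,h}$ (non-degeneracy of $D_{12}L_d^{e,h}$) persists under an $\mathcal{O}(h^{r+1})$ perturbation for $h$ small enough, so $\mathbb{F}^{-}L_d$ is locally invertible and a standard perturbation argument based on the inverse function theorem gives
\[
(\mathbb{F}^{-}L_d)^{-1} = (\mathbb{F}^{-}L_d^{e,h})^{-1} + \mathcal{O}(h^{r+1})
\]
on a suitable compact subset of $T^*Q$. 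Composing two maps each of which differs from its exact counterpart by $\mathcal{O}(h^{r+1})$ preserves this order, by a first-order Taylor expansion and the uniform boundedness of derivatives on the compact set; this yields
\[
\widetilde{F}_{L_d} = \widetilde{F}_{L_d^{e,h}} + \mathcal{O}(h^{r+1}) = \phi_h^{X_H} + \mathcal{O}(h^{r+1}),
\]
which is the desired conclusion.

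The main obstacle is precisely the first step. Definition \ref{order} postulates only a pointwise bound on the Lagrangian values, whereas the Hamiltonian maps depend on first partial derivatives, and derivatives of small functions are not in general small. Making the $C^0 \Rightarrow C^1$ transfer rigorous requires either strengthening the hypothesis to include derivative bounds, or the more delicate compactness-plus-interpolation argument of \cite{PatrickCuell}. Once this bridge is in place, the remainder of the proof is a clean perturbation of smooth diffeomorphisms combined with the exact correspondence $\widetilde{F}_{L_d^{e,h}} = \phi_h^{X_H}$ provided by Theorem \ref{exact:correspondence}.
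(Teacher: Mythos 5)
First, a remark on the ground truth: the paper does not prove Theorem \ref{variational-error} at all --- it is quoted from \cite{marsden-west} and \cite{PatrickCuell} --- so your proposal can only be measured against the argument in those references. Your overall architecture (factor both Hamiltonian maps through the discrete Legendre transforms, show $\mathbb{F}^{\pm}L_d=\mathbb{F}^{\pm}L_d^{e,h}+\mathcal{O}(h^{r+1})$, then propagate the estimate through inversion and composition) is exactly the standard route, and you correctly isolate the crux: Definition \ref{order} is a $C^0$ statement while the Hamiltonian maps see first derivatives.

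However, the mechanism you propose for the $C^0\Rightarrow C^1$ upgrade does not work at the required order. An interpolation inequality of Landau--Kolmogorov type, $\lVert f'\rVert_\infty\leq C\lVert f\rVert_\infty^{1/2}\lVert f''\rVert_\infty^{1/2}$, applied to $f=L_d-L_d^{e,h}$ with $\lVert f\rVert_\infty=\mathcal{O}(h^{r+1})$ and second derivatives merely bounded, yields only $\lVert f'\rVert_\infty=\mathcal{O}(h^{(r+1)/2})$: you lose half the order, and the theorem as stated would not follow. The argument that actually closes this gap in \cite{PatrickCuell} is different in kind: regard $E(q_0,\dot q_0,h)=L_d(q(0),q(h))-L_d^{e,h}(q(0),q(h))$ as a function smooth in all arguments \emph{including} $h$; since $\lvert E\rvert\leq C_1h^{r+1}$ holds uniformly on an open set of initial conditions, all $h$-derivatives $\partial^jE/\partial h^j\vert_{h=0}$ vanish for $j=0,\dots,r$, so Hadamard's lemma (Taylor with integral remainder) gives $E=h^{r+1}R$ with $R$ jointly smooth; differentiating this identity in $(q_0,\dot q_0)$ then yields the $C^1$ (indeed $C^k$) estimate at the full order $h^{r+1}$. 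A second, smaller gap: in the composition step you invoke ``uniform boundedness of derivatives on the compact set,'' but the discrete Legendre transforms are not uniformly Lipschitz as $h\to 0$ (for instance $D_2L_d$ behaves like $\partial L/\partial\dot q$ evaluated at velocity $(q_1-q_0)/h$, so its $q_1$-derivative is $\mathcal{O}(1/h)$). The orders do balance in the end because the inverse map contributes a compensating factor of $h$, but this cancellation must be tracked explicitly --- which is precisely why the cited proofs carry out the estimates in the rescaled variables $(q_0,\dot q_0)$ rather than directly on $Q\times Q$.
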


This theorem gives us a method to find the order of a symplectic integrator for a mechanical system determined by a regular Lagrangian function $L: TQ\rightarrow \R$. We take a discrete Lagrangian $L_{d}\colon Q\times Q\to\R$ 
as an approximation of $L_d^{e,h}$ and the order can be calculated by expanding the expressions for
$L_d(q(0),q(h))$ in a Taylor series in $h$ and comparing this to the same expansions for the exact Lagrangian.
If the both series agree up to $r$ terms, then the discrete Lagrangian is of order $r$ (see \cite{marsden-west} and references therein).

\bigskip


\subsection{Forced discrete mechanics}\label{pl}

One of the most important properties of variational integrators is the possibility to adapt to more complex situations, for instance, systems involving forces or constraints (see \cite{marsden-west}). 

For the case of systems subjected to external forces, given a continuous force $F: TQ\rightarrow T^*Q$, we introduce  the discrete counterpart as two maps  $F_d^{+}: Q\times Q\longrightarrow T^*Q$ and $F_d^{-}: Q\times Q\longrightarrow T^*Q$ called the \textit{discrete force maps}. These discrete forces satisfy  $\pi_{Q}\circ F_{d}^{+}=\text{pr}_{2}$ and $\pi_{Q}\circ F_{d}^{-}=\text{pr}_{1}$, where $\pi_{Q}$ is the canonical projection of the cotangent bundle, and $\text{pr}_{1,2}:Q\times Q\longrightarrow Q$ are the canonical projections onto the first and second factors, respectively.

Now, the discrete equations of motion are derived from the \textit{discrete Lagrange-d'Alembert principle}:  
\begin{equation}
\delta S_{d}(q_{d})\cdot \delta q_{d}+\sum_{k=1}^{N-1} \left[F_d^{+}(q_{k-1},q_{k})+F_d^{-}(q_{k},q_{k+1}) \right]\cdot\delta q_{k}=0
\end{equation}
for all variations $\delta q_{k}$, with $\delta q_0=\delta q_N=0$. 

The \textit{forced Euler-Lagrange equations} are given by 
\begin{equation}\label{forced}
	D_2 L_d(q_{k-1},q_{k})+D_1 L_d (q_{k},q_{k+1})+F_d^{+}(q_{k-1},q_{k})+F_d^{-}(q_{k},q_{k+1})=0\; .
\end{equation}
which implicitly define a discrete forced Lagrangian map ${F}_{L^f_d}: Q\times Q\rightarrow Q\times Q$.

As in the unforced case, we can define the corresponding discrete Legendre transformations $\F^{f\pm}L_{d}:Q\times Q \rightarrow T^{*}Q$  given by
\begin{eqnarray*}
	&& \mathbb{F}^{f+}L_{d}(q_{k-1},q_{k})= (q_{k},D_2L_d(q_{k-1},q_{k})+F_d^+(q_{k-1}, q_k)) \, ,\\
	&& \mathbb{F}^{f-}L_{d}(q_{k-1},q_{k}) = (q_{k-1},-D_1L_d(q_{k-1},q_{k})-F_d^-(q_{k-1}, q_k)) \, .
\end{eqnarray*}
If the discrete forced system is regular, that is, the discrete Legendre transformations $\F^{f\pm}L_{d}$ are local diffeomorphisms then we have an explicit discrete forced Lagrangian map ${F}_{L^f_d}$ which is a local diffeomorphism. In addition, we may consider the discrete forced Hamiltonian map $\widetilde{F}_{L^f_d}: T^*Q\rightarrow T^*Q$ 
\[
\widetilde{F}_{L^f_d}=\mathbb{F}^{f\pm }L_{d}\circ {F}_{L^f_d}\circ   \left(\mathbb{F}^{f\pm }L_{d}\right)^{-1}\; .
\]

Now suppose that $(L,F)$ is a forced continuous Lagrangian system with regular Lagrangian function $L:TQ\rightarrow \R$ and an external force $F:TQ\rightarrow T^{*}Q$. Then, as we know (see Section \ref{forced:mechanics:section}), the dynamical vector field is a SODE $\Gamma_{(L,F)}$ on $TQ$ which is characterized by condition \eqref{fLvf}.

We will denote by $$\text{exp}_{h}^{\Gamma_{(L,F)}}:\mathcal{U}_{h}\subseteq TQ\rightarrow Q \times Q$$
the exponential map associated with $\Gamma_{(L,F)}$ for a sufficiently small positive number $h$. This map is a local diffeomorphism and so we may consider the exact retraction associated to it, which is its inverse map $R_{h,F}^{e-}$.

Using the flow $\phi_{h}^{\Gamma_{(L,F)}}$ of $\Gamma_{(L,F)}$ and the associated exact retraction we may introduce the forced exact discrete Lagrangian function $L_{d,F}^{e,h}:Q \times Q\rightarrow \R$ given by
\begin{equation*}
L_{d,F}^{e,h}(q_0, q_1)=\int_{0}^{h} \left( L\circ \phi_{t}^{\Gamma_{(L,F)}} \circ R_{h,F}^{e-}\right) (q_{0},q_{1}) \ dt,
\end{equation*}
and the double exact discrete force $F_{d}^{e,h}:Q \times Q\rightarrow T^{*}(Q\times Q)$ defined by
\begin{equation*}
	\langle F_d^{e,h}(q_0, q_1), (X_{q_0}, X_{q_1})\rangle
	=\int^h_0 \langle \left( F\circ \phi_{t}^{\Gamma_{(L,F)}} \circ R_{h,F}^{e-}\right) (q_{0},q_{1}), X_{0,1}(t)\rangle\; dt
\end{equation*}
where $X_{0,1}(t)=T_{(q_0, q_1)}(\tau_Q\circ \phi_t^{\Gamma_{(L,F)}}\circ R^{e-}_{h,F})(X_{q_0}, X_{q_1})$, for $(X_{q_0}, X_{q_1})\in T_{q_{0}}Q\times T_{q_{1}}Q$.

Then, the exact discrete force maps are just $F_d^{e,+}: Q\times Q\rightarrow T^*Q$ and $F_d^{e,-}: Q\times Q\rightarrow T^*Q$ given by
\begin{eqnarray*}
	\langle F_d^{e,+}(q_0, q_1), X_{q_{1}}\rangle&=& \langle F_d^{e,h}(q_0, q_1), (0_{q_0}, X_{q_{1}})\rangle\\
	\langle F_d^{e,-}(q_0, q_1), X_{q_{0}}\rangle&=& \langle F_d^{e,h}(q_0, q_1), (X_{q_{0}}, 0_{q_1}, \rangle.
\end{eqnarray*}
Note that if we denote by $q:Q\times Q\times [0,h]\rightarrow Q$ the function defined by
\begin{equation*}
	q(q_{0},q_{1},t)=q_{0,1}(t),
\end{equation*}
where $q_{0,1}:[0,h]\rightarrow Q$ is the solution of the forced Lagrangian system satisfying $q_{0,1}(0)=q_{0}$ and $q_{0,1}(h)=q_{1}$, then it is clear that
\begin{equation*}
q_{0,1}(t)=\left( \tau_{Q}\circ \phi_{t}^{\Gamma_{(L,F)}} \circ R_{h,F}^{e-}\right) (q_{0},q_{1}).
\end{equation*}
So, with this notation, the maps $L_{d,F}^{e,h}$, $F_d^{e,+}$ and $F_d^{e,-}$ may be written as follows
\begin{equation*}
L_{d,F}^{e,h}(q_0,q_1)=\int_{0}^{h} L(q_{0,1}(t),\dot{q}_{0,1}(t)) \ dt,
\end{equation*}
\begin{equation*}
F_d^{e,+}(q_0,q_1)=\int_{0}^{h} \langle F(q_{0,1}(t),\dot{q}_{0,1}(t)),\frac{\partial q_{0,1}}{\partial q_{1}} \rangle \ dt
\end{equation*}
and
\begin{equation*}
F_d^{e,-}(q_0,q_1)=\int_{0}^{h} \langle F(q_{0,1}(t),\dot{q}_{0,1}(t)),\frac{\partial q_{0,1}}{\partial q_{0}} \rangle \ dt,
\end{equation*}
where
\begin{equation*}
	\frac{\partial q_{0,1}}{\partial q_{1}}:T_{q_{1}}Q\rightarrow T_{q_{0,1}(t)}Q, \quad \text{and} \quad \frac{\partial q_{0,1}}{\partial q_{0}}:T_{q_{0}}Q\rightarrow T_{q_{0,1}(t)}Q
\end{equation*}
are given by
\begin{equation*}
	\langle \frac{\partial q_{0,1}}{\partial q_{1}},X_{q_{1}} \rangle=T_{(q_{0},q_{1},t)}q (0_{q_{0}},X_{q_{1}},0_{t}), \quad \langle \frac{\partial q_{0,1}}{\partial q_{0}},X_{q_{0}} \rangle=T_{(q_{0},q_{1},t)}q (X_{q_{0}},0_{q_{1}},0_{t}),
\end{equation*}
for $X_{q_{0}}\in T_{q_{0}}Q$ and $X_{q_{1}}\in T_{q_{1}}Q$.

Using the previous definitions, one may prove a forced version of Theorem \ref{exact:correspondence} (cf. \cite{marsden-west}). Moreover, in \cite{Sato}, the authors give a forced version of Theorem \ref{variational-error} using the variational order of the corresponding duplicated system.

In fact, we will need a useful Lemma from \cite{marsden-west} in Section \ref{discrete:nonholonomic}.

\begin{lemma}\label{exactlegendre}
	Let $(Q,L,F)$ be a forced Lagrangian problem with regular Lagrangian function $L$. The corresponding exact discrete Legendre transformations satisfy
	\begin{enumerate}
		\item $\F^{f+}L_{d,F}^{e,h}(q_{0},q_{1})=\F L (q_{0,1}(h),\dot{q}_{0,1}(h))$;
		\item $\F^{f-}L_{d,F}^{e,h}(q_{0},q_{1})=\F L (q_{0,1}(0),\dot{q}_{0,1}(0))$;
	\end{enumerate}
	where $q_{0,1}(t)$ is the solution of the forced Euler-Lagrange equations verifying $q_{0,1}(0)=q_{0}$ and $q_{0,1}(h)=q_{1}$.
\end{lemma}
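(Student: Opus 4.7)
The plan is to compute the partial derivatives of $L_{d,F}^{e,h}$ with respect to $q_{1}$ (for part (1)) and $q_{0}$ (for part (2)) by differentiating under the integral sign, integrating by parts the velocity terms, and then applying the forced Euler--Lagrange equations \eqref{FEL} satisfied by $q_{0,1}(t)$ so that only boundary terms survive.

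For part (1), fix $X_{q_{1}}\in T_{q_{1}}Q$ and denote $\delta q(t) = \frac{\partial q_{0,1}}{\partial q_{1}}\cdot X_{q_{1}}$, a vector field along $q_{0,1}$. Differentiating the definition of $L_{d,F}^{e,h}$ yields
\begin{equation*}
\langle D_{2}L_{d,F}^{e,h}(q_{0},q_{1}), X_{q_{1}}\rangle = \int_{0}^{h}\Bigl[\frac{\partial L}{\partial q^{i}}(q_{0,1},\dot q_{0,1})\,\delta q^{i}(t) + \frac{\partial L}{\partial \dot q^{i}}(q_{0,1},\dot q_{0,1})\,\dot{\delta q}^{i}(t)\Bigr]\,dt.
\end{equation*}
Integration by parts of the second term and regrouping produces
\begin{equation*}
\int_{0}^{h}\Bigl[\frac{\partial L}{\partial q^{i}} - \frac{d}{dt}\frac{\partial L}{\partial \dot q^{i}}\Bigr]\delta q^{i}\,dt + \Bigl[\frac{\partial L}{\partial \dot q^{i}}\delta q^{i}\Bigr]_{0}^{h}.
\end{equation*}
Since $q_{0,1}$ satisfies \eqref{FEL}, the bracketed expression in the integral equals $-F_{i}(q_{0,1},\dot q_{0,1})$, so the integral becomes exactly $-\langle F_{d}^{e,+}(q_{0},q_{1}),X_{q_{1}}\rangle$ by the definition given in the excerpt.

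For the boundary terms, since $q_{0,1}(0)=q_{0}$ is independent of $q_{1}$ we have $\delta q(0)=0$, while $q_{0,1}(h)=q_{1}$ gives $\delta q(h)=X_{q_{1}}$. Recognising $\frac{\partial L}{\partial \dot q^{i}}(q_{0,1}(h),\dot q_{0,1}(h))$ as the fibre coordinates of $\F L(q_{0,1}(h),\dot q_{0,1}(h))$, we obtain
\begin{equation*}
\langle D_{2}L_{d,F}^{e,h}(q_{0},q_{1}) + F_{d}^{e,+}(q_{0},q_{1}),X_{q_{1}}\rangle = \langle \F L(q_{0,1}(h),\dot q_{0,1}(h)),X_{q_{1}}\rangle,
\end{equation*}
which together with the base-point identity $\pi_{Q}\circ\F^{f+}L_{d,F}^{e,h}=\mathrm{pr}_{2}$ gives assertion (1). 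Assertion (2) is completely analogous: one varies with respect to $q_{0}$, so this time $\delta q(h)=0$ and $\delta q(0)=X_{q_{0}}$, the integral contributes $-\langle F_{d}^{e,-},X_{q_{0}}\rangle$, and the surviving boundary term at $t=0$ produces $-\langle \F L(q_{0,1}(0),\dot q_{0,1}(0)),X_{q_{0}}\rangle$ with a global minus sign that matches the definition of $\F^{f-}L_{d,F}^{e,h}$.

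The only step that requires any care is the differentiation of $q_{0,1}(t)$ with respect to the endpoints, which is legitimate because the regularity of $L$ and Theorem \ref{Sodeexp} (applied to the forced SODE $\Gamma_{(L,F)}$, as noted in the paragraph defining $\text{exp}_{h}^{\Gamma_{(L,F)}}$) ensure that the map $(q_{0},q_{1},t)\mapsto q_{0,1}(t)$ is smooth on a suitable open set, together with the boundary values $\partial q_{0,1}(0)/\partial q_{1}=0$, $\partial q_{0,1}(h)/\partial q_{1}=\mathrm{id}_{T_{q_{1}}Q}$ and the analogous identities for $\partial/\partial q_{0}$; these follow by differentiating the defining relations $q_{0,1}(0)=q_{0}$ and $q_{0,1}(h)=q_{1}$.
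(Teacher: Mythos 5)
Your proof is correct and is exactly the standard first-variation computation (differentiate under the integral, integrate by parts, apply the forced Euler--Lagrange equations, keep the boundary terms) that establishes this lemma; the signs and the identification of $-\int_0^h F_i\,\delta q^i\,dt$ with $-\langle F_d^{e,\pm},\cdot\rangle$ all check out against the paper's definitions of $L_{d,F}^{e,h}$, $F_d^{e,\pm}$ and $\F^{f\pm}L_d$. Note that the paper itself gives no proof of this lemma --- it is quoted from \cite{marsden-west} --- so there is nothing internal to compare against, but your argument is precisely the one in that reference.
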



\section{Discrete nonholonomic mechanics}\label{discrete:nonholonomic}

In this section, we introduce a modification of the Lagrange-d'Alembert principle. Later, using the construction of the nonholonomic exponential map in Section \ref{sec:exponential-map}, we will define the exact discrete version of nonholonomic mechanics and show that it satisfies the modified Lagrange-d'Alembert principle.  

Let $\D$ be a distribution on the manifold $Q$. Let $L_{d}:Q\times Q\longrightarrow\R$ be a discrete Lagrangian function, $F_d^{\pm}:Q\times Q\longrightarrow T^*Q$ discrete forces and $\M_{d}\subseteq Q\times Q$ a discrete constraint space. We remark that $\pi_{Q}\circ F_{d}^{+}=\text{pr}_{2}$ and $\pi_{Q}\circ F_{d}^{-}=\text{pr}_{1}$, where $\pi_{Q}:T^{*}Q\rightarrow Q$ and $\text{pr}_{1,2}:Q\times Q\rightarrow Q$ are the canonical projections.

\begin{definition}\label{mLdA}
	A sequence $(q_{0}, ..., q_{N})$ in $Q$ satisfies the \textit{modified Lagrange-d'Alembert principle} if it extremizes
	\begin{equation}
	\begin{split}
	& dS_{d}(q_{d})\cdot \delta q_{d}+\sum_{k=1}^{N-1} \left[F^{+}(q_{k-1},q_{k})+F^{-}(q_{k},q_{k+1}) \right]\cdot\delta q_{k}=0 \\
	& (q_{k},q_{k+1})\in \M_{d}, \ 0\leqslant k \leqslant N-1
	\end{split}
	\end{equation}
	for all variations lying in the distribution $\delta q_{k}\in \D_{q_k}$, $\delta q_{d}=(\delta q_{0},...,\delta q_{N})\in T_{q_{d}}\C_{d}(q_{0},q_{N})$ and $\delta q_0=\delta q_N=0$.
\end{definition}

\begin{remark}{\rm 
		Observe that this principle is exactly the same that discrete Lagrange-d'Alembert principle for forced systems when $\D=TQ$ and $\M_{d}=Q\times Q$. It is also the  discrete Lagrange-d'Alembert principle for nonholonomic systems introduced by \cite{CM2001} when $F^+=F^-=0$.
		Also,  in this context we find the methods proposed by \cite{MR2134330}, using a discretization of the forces for a nonholonomic system and a discrete submanifold derived from the continuous constraints and the forced discrete Legendre transformations  
		Recently, a similar principle was introduced in \cite{parks} to study discretizations of Dirac mechanics. 
	}
\end{remark}

Now, as in the case of forced systems, we have that
\begin{proposition}
	A sequence $(q_{0}, ..., q_{N})$ in $Q$ satisfies the modified Lagrange-d'Alembert principle if and only if it satisfies \textit{modified Lagrange-d'Alembert equations}
	\begin{eqnarray}\label{MLA}
	&& D_2 L_d(q_{k-1},q_{k})+D_1 L_d (q_{k},q_{k+1})+F^{+}(q_{k-1},q_{k})+F^{-}(q_{k},q_{k+1}) \in \D^{o}_{q_k} \nonumber \\
	&& \omega^{a}(q_{k},q_{k+1})=0, \ 0\leqslant k \leqslant N-1,
	\end{eqnarray}
	where $\M_d$ is determined by the zeros of a set of  constraint functions $\omega^{a}:Q\times Q\longrightarrow \R$.
\end{proposition}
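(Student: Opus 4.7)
The plan is a direct variational calculus argument, mirroring the standard derivation of the discrete Euler--Lagrange equations in the unconstrained case but keeping track of both the forcing terms and the restricted class of allowed variations.

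First I would write the differential of the discrete action explicitly, using the splitting $dL_d = D_1 L_d + D_2 L_d$ on each factor:
\begin{equation*}
dS_d(q_d)\cdot \delta q_d = \sum_{k=0}^{N-1}\bigl[ D_1 L_d(q_k, q_{k+1})\cdot \delta q_k + D_2 L_d(q_k, q_{k+1})\cdot \delta q_{k+1}\bigr].
\end{equation*}
I then perform the standard discrete integration by parts: reindex the second sum ($k\mapsto k-1$), and use the boundary conditions $\delta q_0 = \delta q_N = 0$ to drop the endpoint contributions. This produces
\begin{equation*}
dS_d(q_d)\cdot \delta q_d = \sum_{k=1}^{N-1}\bigl[ D_2 L_d(q_{k-1}, q_{k}) + D_1 L_d(q_k, q_{k+1})\bigr]\cdot \delta q_k.
\end{equation*}

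Next I would combine this with the force terms from Definition \ref{mLdA}, so that the modified Lagrange--d'Alembert principle becomes
\begin{equation*}
\sum_{k=1}^{N-1}\bigl[ D_2 L_d(q_{k-1}, q_{k}) + D_1 L_d(q_k, q_{k+1}) + F^{+}(q_{k-1}, q_k) + F^{-}(q_k, q_{k+1})\bigr]\cdot \delta q_k = 0,
\end{equation*}
required for every choice of interior variations $\delta q_k \in \D_{q_k}$. Because these variations are completely independent across different values of $k$ (one can always produce a variation supported on a single index by choosing $\delta q_j = 0$ for $j\neq k$), the bracketed covector at index $k$ must annihilate every vector in $\D_{q_k}$. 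By definition of the annihilator, this gives exactly
\begin{equation*}
D_2 L_d(q_{k-1}, q_k) + D_1 L_d(q_k, q_{k+1}) + F^{+}(q_{k-1}, q_k) + F^{-}(q_k, q_{k+1}) \in \D^o_{q_k}.
\end{equation*}
The constraint condition $\omega^a(q_k, q_{k+1}) = 0$ is just a restatement of the requirement $(q_k, q_{k+1}) \in \M_d$ in Definition \ref{mLdA}, so it carries over verbatim.

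The converse is immediate: given a sequence satisfying \eqref{MLA}, one contracts the annihilator condition with any admissible variation $\delta q_k \in \D_{q_k}$ (vanishing at the endpoints), sums over $k$, and reverses the discrete summation by parts to recover the modified Lagrange--d'Alembert principle. There is no real obstacle here --- the only subtlety is making sure the independence of the $\delta q_k$ across $k$ and the pointwise freedom within each $\D_{q_k}$ are both used to justify the passage from an integrated identity to a pointwise annihilator condition. This is the discrete analogue of the fundamental lemma of the calculus of variations, and in the discrete setting it is trivial because the variations at distinct indices are genuinely independent vectors in finite-dimensional vector spaces.
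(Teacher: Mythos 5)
Your proof is correct and follows exactly the route the paper intends: the paper omits the argument entirely, saying only ``as in the case of forced systems,'' and the standard derivation for forced discrete systems is precisely the discrete summation by parts plus the independence of the interior variations $\delta q_k \in \D_{q_k}$ that you carry out. No gaps; the constraint condition is, as you note, simply a restatement of $(q_k,q_{k+1})\in\M_d$.
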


\subsection{Exact discrete nonholonomic equations}\label{discrete:nonholonomic-1}

If we denote the inclusion of $\D$ in $TQ$ by $i_{\D}: \D\hookrightarrow TQ$, we induce   the dual projection 
$i_\D^*:  T^*Q\rightarrow \D^*$  defined by
\begin{equation*}
	\langle i_D^*(\mu_q), v_q\rangle =\langle \mu_q, i_{\D}(v_q)\rangle, \quad \mu_q\in T^*_qQ, \ v_q\in \D_q.
\end{equation*} 

The Legendre transformations of the Lagrangian functions $L:TQ\rightarrow \R$ and $l=L|_{\D}: \D\rightarrow \R$ satisfy the following relation
\begin{equation}\label{legendre}
	i_{\D}^*\circ \F L\circ i_{\D}=\F l,
\end{equation}
where $\F l: \D\rightarrow \D^*$ is the restricted Legendre transformation defined from $l$ (see Subsection \ref{nho}).

Now consider the {\sl exact discrete nonholonomic Legendre transformations}
$\F^{\pm}_{h,nh} l: \M_{h}^{e,nh}\rightarrow \D^*$ defined by 
\begin{eqnarray*}
\F^{-}_{h,nh} l (q_0, q_1)&=&\F l\circ R_{h,nh}^{e-}(q_0,q_1)\in \D^*_{q_0}\\
\F^{+}_{h,nh} l (q_0, q_1)&=&\F l \circ R_{h,nh}^{e+}(q_0,q_1)\in \D^*_{q_1}.
\end{eqnarray*}
Note that $\F^{\pm}_{h,nh} l$ are (local) diffeomorphisms.

As we will see below, the condition of momentum matching gives the {\sl exact discrete  nonholonomic equations}:   
\begin{equation}\label{momemtummatch}
	\begin{split}
	\F^{+}_{h, nh} l (q_0, q_1)-\F^{-}_{h,nh} l (q_1, q_2) & = 0 \\
	(q_0, q_1), (q_{1},q_{2}) & \in \M_{h}^{e,nh}.
	\end{split}
\end{equation}
We sill see in a theorem below why they are called "exact".

\begin{remark}
	{\rm 
Alternatively we can define the subset 
\begin{equation*}
S_{nh}^e = \{ (\F l\circ R_{h, nh}^{e-}(q_0,q_1), \F l\circ R_{h, nh}^{e+}(q_0,q_1)) \ | \ (q_0, q_1)\in M_{h}^{e,nh}\}
\end{equation*}
and we can think $S_{nh}^e\subset \D^*\times \D^*$ as an implicit difference equation \cite{IMMP} producing the exact discrete nonholonomic dynamics.}
\end{remark}

Observe that, since both $R_{h,nh}^{e-}$ and $\F l$ are local diffeomorphisms, then equations \eqref{momemtummatch} implicitly define an \textit{exact discrete flow}: $\Phi_{h,nh}^e:\M_{h}^{e,nh}\rightarrow \M_{h}^{e,nh}$ by
\begin{equation}\label{exactflow}
	\Phi_{h,nh}^e (q_0, q_1)= \text{exp}_{h}^{\Gamma_{nh}}\circ R_{h,nh}^{e+}(q_0,q_1).
\end{equation}

Moreover,  it produces a well-defined flow on $\D^*$, denoted by $\varphi^e_{h,nh}: \D^*\rightarrow \D^*$, which is defined  by
\begin{equation*}
	\varphi_{h,nh}^e(\mu_{q_0})= \F^{+}_{h,nh} l \circ ( \F^{-}_{h,nh} l)^{-1}(\mu_{q_0}), \quad \mu_{q_0}\in D^*_{q_0}.
\end{equation*}
The interplay between both discrete flows and the nonholonomic Legendre transformations may be summarized in the following commutative diagram (see Figure \ref{dis-flow-non-Le-trans}).
\begin{figure}[htb!]
	\centering
	\caption{Commutative diagram. Exact discrete and continuous noholonomic flows}
	\label{dis-flow-non-Le-trans}
	\begin{tikzcd}
	& \mathcal{M}_{h}^{e,nh} \arrow[rr, "\Phi_{h,nh}^{e}"] \arrow[ld, "\mathbb{F}_{h,nh}^{-}l"'] \arrow[rd, "\mathbb{F}_{h,nh}^{+}l"] &                 & \mathcal{M}_{h}^{e,nh} \arrow[ld, "\mathbb{F}_{h,nh}^{-}l"] \\
	\mathcal{D}^{*} \arrow[rr, "\varphi_{h,nh}^{e}"'] &                                                                                                                             & \mathcal{D}^{*} &                                                          
	\end{tikzcd}
\end{figure}

Having the construction of nonholonomic integrators in mind, it is interesting to observe that the exact discrete nonholonomic dynamics  exactly reproduces the continuous flow of the nonholonomic system at any step $h$.

\begin{theorem}
	Given $(q_{0},q_{1})\in \M_{h}^{e,nh}$ and $h>0$, consider the sequence $(q_{0},q_{1},...,q_{N})$ obtained by multiple iterations of the exact discrete flow $\Phi^e_{h,nh}$ and thus, by definition, satisfying the exact discrete  nonholonomic equations
	\begin{equation}\label{ddd}
		\F^{+}_{h,nh} l (q_{k-1}, q_{k})-\F^{-}_{h,nh} l (q_{k}, q_{k+1})=0, \quad (q_{k},q_{k+1})\in \M_{h}^{e,nh}, 
	\end{equation}
for $0\leqslant k \leqslant N-1$.

	Then, we have that:
	\begin{enumerate}
		\item The sequence $(q_{0},q_{1},...,q_{N})$ exactly matches the trajectories of $\Gamma_{nh}$ in the sense that
			\begin{equation}
				q_{k}=q_{0,1}(kh),
			\end{equation}
			where $q_{0,1}$ is the unique trajectory of $\Gamma_{nh}$ satisfying $q_{0,1}(0)=q_{0}$ and $q_{0,1}(h)=q_{1}$.
		\item The Legendre transforms satisfy the equation
		\begin{equation}
		\F^{+}_{h, nh}l(q_0, q_1) =\phi^{\bar{\Gamma}_{nh}}_{h}( \F^{-}_{h, nh}l(q_0, q_1)) \; ,
		\end{equation}
		where $\{\phi^{\bar{\Gamma}_{nh}}_{h}\}$ is the flow of the vector field $\bar{\Gamma}_{nh}=(\F l)_* \Gamma_{nh}\in {\mathfrak X}(\D^*)$, that is  $\phi^{\bar{\Gamma}_{nh}}_{h}=\varphi_{h,nh}^{e}$.
	\end{enumerate}
\end{theorem}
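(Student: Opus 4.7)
The plan is to prove both parts by straightforward unwinding of the definitions of $\Phi^e_{h,nh}$, the retractions $R^{e\pm}_{h,nh}$, and the exact discrete Legendre transformations in terms of the flow $\phi_t^{\Gamma_{nh}}$, with no real analytical difficulty expected; the only subtle point is keeping the identifications $\M_{h}^{e,nh}\cong \U_h$ straight.

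For part (1), I would start from $(q_0,q_1)\in\M_h^{e,nh}$ and set $v_0:=R^{e-}_{h,nh}(q_0,q_1)\in\D_{q_0}$. By definition of the nonholonomic exponential map, the unique solution $q_{0,1}(t)$ of $\Gamma_{nh}$ with $\dot q_{0,1}(0)=v_0$ satisfies $q_{0,1}(0)=q_0$ and $q_{0,1}(h)=\tau_Q(\phi_h^{\Gamma_{nh}}(v_0))=q_1$. Define tentatively $\tilde q_k:=q_{0,1}(kh)$ and $v_k:=\dot q_{0,1}(kh)=\phi_h^{\Gamma_{nh}}(v_{k-1})$. The key observation is that, by definition of $R^{e+}_{h,nh}$, $R^{e+}_{h,nh}(q_{k-1},q_k)=\phi_h^{\Gamma_{nh}}(R^{e-}_{h,nh}(q_{k-1},q_k))$, so inductively $R^{e+}_{h,nh}(\tilde q_{k-1},\tilde q_k)=v_k$ and hence
\[
\Phi^e_{h,nh}(\tilde q_{k-1},\tilde q_k)=\mathrm{exp}_h^{\Gamma_{nh}}(v_k)=(\tilde q_k,\tilde q_{k+1}).
\]
This shows that the sequence $(\tilde q_k)$ satisfies the recursion defining $\Phi^e_{h,nh}$. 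It remains to identify $(\tilde q_k)$ with the sequence $(q_k)$ produced in the statement: since $\Phi^e_{h,nh}$ is a well-defined map (Lemma~\ref{expdiff} makes $\mathrm{exp}_h^{\Gamma_{nh}}$ a diffeomorphism onto $\M_h^{e,nh}$, so the recursion gives a unique continuation), both sequences coincide starting from $(q_0,q_1)=(\tilde q_0,\tilde q_1)$. In passing, one also checks that these points verify the exact discrete nonholonomic equations \eqref{ddd}, since $\F^+_{h,nh}l(q_{k-1},q_k)=\F l(v_k)=\F^-_{h,nh}l(q_k,q_{k+1})$.

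For part (2), I would simply use the pushforward relation $\phi_h^{\bar\Gamma_{nh}}\circ\F l=\F l\circ\phi_h^{\Gamma_{nh}}$, which is equivalent to $\bar\Gamma_{nh}=(\F l)_*\Gamma_{nh}$ because $\F l$ is a (local) diffeomorphism. Using the identities $\F^-_{h,nh}l=\F l\circ R^{e-}_{h,nh}$ and $\F^+_{h,nh}l=\F l\circ\phi_h^{\Gamma_{nh}}\circ R^{e-}_{h,nh}$, one obtains
\[
\F^+_{h,nh}l(q_0,q_1)=\F l\bigl(\phi_h^{\Gamma_{nh}}(v_0)\bigr)=\phi_h^{\bar\Gamma_{nh}}\bigl(\F l(v_0)\bigr)=\phi_h^{\bar\Gamma_{nh}}\bigl(\F^-_{h,nh}l(q_0,q_1)\bigr).
\]
Finally, composing with the inverse of $\F^-_{h,nh}l$ and recalling the definition $\varphi^e_{h,nh}=\F^+_{h,nh}l\circ(\F^-_{h,nh}l)^{-1}$ from the commutative diagram in Figure~\ref{dis-flow-non-Le-trans} yields $\varphi^e_{h,nh}=\phi_h^{\bar\Gamma_{nh}}$.

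The whole argument is essentially bookkeeping: once the retractions are rewritten as $R^{e-}_{h,nh}=(\mathrm{exp}_h^{\Gamma_{nh}})^{-1}$ and $R^{e+}_{h,nh}=\phi_h^{\Gamma_{nh}}\circ R^{e-}_{h,nh}$, everything reduces to the semigroup property of the flow and the naturality of the Legendre transform. The only point where one must be slightly careful is to invoke Lemma~\ref{expdiff} to ensure that all compositions make sense on suitable open subsets, i.e.\ that iterated application of $\Phi^e_{h,nh}$ stays in the domain of the retractions — this is why the statement implicitly requires $h$ small enough that the continuous trajectory $q_{0,1}(t)$ exists on $[0,Nh]$ and the successive pairs $(q_k,q_{k+1})$ remain in $\M_h^{e,nh}$.
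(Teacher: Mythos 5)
Your proof is correct and follows essentially the same route as the paper: part (1) by unwinding the definition of $\Phi^e_{h,nh}=\mathrm{exp}_h^{\Gamma_{nh}}\circ R^{e+}_{h,nh}$ together with the semigroup property of the flow, and part (2) via the chain $\F^{+}_{h,nh}l=\F l\circ\phi^{\Gamma_{nh}}_h\circ R^{e-}_{h,nh}=\phi^{\bar\Gamma_{nh}}_h\circ\F^{-}_{h,nh}l$ using $\F l$-relatedness. You simply supply more detail (the explicit induction and the domain caveat via Lemma~\ref{expdiff}) than the paper, which dispatches part (1) as a direct consequence of the definition.
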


\begin{proof}
The first item is a direct consequence of the definition of the exact discrete flow in \eqref{exactflow}. The second item is just a consequence of the definition of the Hamiltonian vector field as being $\F l$-related to the non-holonomic vector field $\Gamma_{nh}$. Indeed
\begin{equation*}
	\F^{+}_{nh}l=\F l \circ R_{h,nh}^{e+}=\F l \circ \phi^{\Gamma_{nh}}_{h} \circ R_{h,nh}^{e-}=\phi^{\bar{\Gamma}_{nh}}_{h} \circ \F l \circ R_{h,nh}^{e-}=\phi^{\bar{\Gamma}_{nh}}_{h}\circ \F^{-}_{nh}l.
\end{equation*}
\end{proof}

For the construction  of geometric integrators we will need another alternative expression of Equations (\ref{ddd}). 
In particular, using \eqref{legendre} we can rewrite these equations in a way that are very similar to the modified Lagrange-d'Alembert equations defined in Equation (\ref{MLA}) as
\begin{eqnarray*}
	i^*_{D}\left((\F L \circ i_{\D} \circ R_{h,nh}^{e+}) (q_0,q_1)-(\F L \circ i_{\D} \circ R_{h,nh}^{e-})(q_1,q_2)\right)&=&0\\
	(q_{0},q_{1}),(q_1, q_2)&\in& \M_{h}^{e,nh}
\end{eqnarray*}
or in other words
\begin{equation}\label{nhintegrator}
	\begin{split}
		(\F L\circ R_{h,nh}^{e+}(q_{0},q_1)-\F L\circ R_{h,nh}^{e-})(q_1,q_{2}) & \in\D^{o}_{q_1}\\
		(q_{0},q_{1}),(q_1, q_{2}) & \in \M_{h}^{e,nh},
	\end{split}
\end{equation}
where we omit $i_{\D}$ since $R_{h,nh}^{e+}(q_{0},q_1)$ and $R_{h,nh}^{e-}(q_1,q_{2})$ are vectors in the distribution $\D$ and may be identified with its inclusion.

\subsection{Exact discrete nonholonomic equations from a modified Lagrange-d'Alembert principle}

Given a regular nonholonomic system determined by the triple $(Q, L, D)$, we have seen how to derive the nonholonomic force $F_{nh}: \D\rightarrow T^*Q$ by modifying the free dynamics to satisfy the nonholonomic constraints.
 
Consider now an arbitrary extension $\widetilde{F_{nh}}: T Q\rightarrow T^*Q$ of $F_{nh}$. It is clear that the solutions of the forced system determined by $(L, \widetilde{F_{nh}})$ with initial conditions in $\D$, remain in $\D$ and match the trajectories of the nonholonomic system. In fact, if $\Gamma_{nh}$ is the nonholonomic dynamics and $\Gamma_{(L,\widetilde{F_{nh}})}$ is the forced dynamics, then it is clear that $\Gamma_{nh}=\Gamma_{(L,\widetilde{F_{nh}})}|_{\D}$.

If $R_{h,\widetilde{F_{nh}}}^{e-}$ is the exact retraction associated with the forced SODE $\Gamma_{(L,\widetilde{F_{nh}})}$ then, as in Section \ref{pl}, we may define the exact discrete versions
\begin{equation*}
L_{d,\widetilde{F_{nh}}}^{e,h}(q_0, q_1)=\int_{0}^{h} \left( L\circ \phi_{t}^{\Gamma_{(L,\widetilde{F_{nh}})}}\circ R_{h,\widetilde{F_{nh}}}^{e-}\right) (q_{0},q_{1}) \ dt,
\end{equation*}
and
\begin{equation*}
	\begin{split}
		\langle (\widetilde{F_{nh}})_d^{e,+}(q_0, q_1), X_{q_{1}}\rangle & = \langle F_d^{e,h}(q_0, q_1), (0_{q_0}, X_{q_{1}})\rangle \\
		\langle (\widetilde{F_{nh}})_d^{e,-}(q_0, q_1), X_{q_{0}}\rangle & = \langle F_d^{e,h}(q_0, q_1), (X_{q_{0}}, 0_{q_1}, \rangle,
	\end{split}
\end{equation*}
where $F_d^{e,h}:Q\times Q\rightarrow T^{*}(Q\times Q)$ is the double exact discrete force given by
\begin{equation*}
\langle F_d^{e,h}(q_0, q_1), (X_{q_0}, X_{q_1})\rangle
=\int^h_0 \langle \left( \widetilde{F_{nh}}\circ \phi_{t}^{\Gamma_{(L,\widetilde{F_{nh}}})} \circ R_{h,\widetilde{F_{nh}}}^{e-}\right) (q_{0},q_{1}), X_{0,1}(t)\rangle\; dt
\end{equation*}
where $X_{0,1}(t)=T_{(q_0, q_1)}(\tau_Q\circ \phi_t^{\Gamma_{(L,\widetilde{F_{nh}})}}\circ R^{e-}_{h,\widetilde{F_{nh}}})(X_{q_0}, X_{q_1})$, for $(X_{q_0}, X_{q_1})\in T_{q_{0}}Q\times T_{q_{1}}Q$.

Following the notation in \cite{marsden-west}, we may rewrite these maps as
\begin{eqnarray*}
		L_{d,\widetilde{F_{nh}}}^{e,h}(q_0, q_1)&=&\int_{0}^{h} L(q_{0,1}(t),\dot{q}_{0,1}(t)) \ dt\; ,  \\
		(\widetilde{F_{nh}})^{e,+}_d(q_0, q_1)&=& \int_{0}^{h} \langle (\widetilde{F_{nh}})(q_{0,1}(t),\dot{q}_{0,1}(t)), \frac{\partial q_{0,1}(t)}{\partial q_1}\rangle\ dt\; ,\\
		(\widetilde{F_{nh}})^{e,-}_d(q_0, q_1)&=& \int_{0}^{h} \langle (\widetilde{F_{nh}})(q_{0,1}(t),\dot{q}_{0,1}(t)), \frac{\partial q_{0,1}(t)}{\partial q_0}\rangle\ dt\; .
	\end{eqnarray*}
where now $q_{0,1}: [0, h]\rightarrow Q$ is the solution of the forced Euler-Lagrange equations for $(L, \widetilde{F_{nh}})$  verifying $q_{0,1}(0)=q_{0}$ and $q_{0,1}(h)=q_{1}$. 

We now prove that when we apply the modified Lagrange-d'Alembert principle to the exact discrete objects defined above, we obtain the exact discrete  nonholonomic equations.

\begin{theorem}
	Let $(Q,L,\D)$ be a regular continuous-time nonholonomic problem with regular Lagrangian $L$. Consider the exact discrete Lagrangian function $L_{d,\widetilde{F_{nh}}}^{e,h}$ defined above, as well as the exact discrete forces $(\widetilde{F_{nh}})^{e,-}_d$ and $(\widetilde{F_{nh}})^{e,+}_d$. Also let $\M_{h}^{e,nh}$ be the exact discrete space associated to $(Q,L,\D)$. Then the modified Lagrange-d'Alembert principle induces modified Lagrange-d'Alembert equations
	\begin{equation}\label{MLAnh}
	\begin{split}
	& D_2 L_{d,\widetilde{F_{nh}}}^{e,h}(q_{k-1},q_{k})+D_1 L_{d,\widetilde{F_{nh}}}^{e,h} (q_{k},q_{k+1})\\
	&+(\widetilde{F_{nh}})^{e,+}_d(q_{k-1},q_{k})+(\widetilde{F_{nh}})^{e,-}_d(q_{k},q_{k+1}) \in \D^{o}_{q_k} \\
	& (q_{k},q_{k+1})\in \M_{h}^{e,nh}, \ 0\leqslant k \leqslant N-1,
	\end{split}
	\end{equation}
	which are equivalent to the exact discrete  nonholonomic equations \eqref{momemtummatch}.
\end{theorem}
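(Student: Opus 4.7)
The plan is to prove the equivalence by reducing the forced-system quantities appearing in \eqref{MLAnh} to the nonholonomic retraction maps that define \eqref{momemtummatch}, then invoking the relation \eqref{legendre} between $\F L$ and $\F l$.

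First I would apply Lemma \ref{exactlegendre} to the auxiliary forced system $(L,\widetilde{F_{nh}})$. Unpacking the definitions of the forced discrete Legendre transforms $\F^{f\pm}L_{d,\widetilde{F_{nh}}}^{e,h}$, the left-hand side of \eqref{MLAnh} can be rewritten as the difference
\[
\F^{f+}L_{d,\widetilde{F_{nh}}}^{e,h}(q_{k-1},q_k)-\F^{f-}L_{d,\widetilde{F_{nh}}}^{e,h}(q_k,q_{k+1}),
\]
which by Lemma \ref{exactlegendre} equals $\F L(\dot q_{k-1,k}(h))-\F L(\dot q_{k,k+1}(0))$, where $q_{k-1,k}$ and $q_{k,k+1}$ are the unique forced-Euler-Lagrange curves joining the respective consecutive points.

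Next I would use the assumption $(q_{k-1},q_k),(q_k,q_{k+1})\in \M_h^{e,nh}$ together with the fact that $\Gamma_{(L,\widetilde{F_{nh}})}\vert_{\D}=\Gamma_{nh}$. By the uniqueness of trajectories with prescribed endpoints (available since both systems are regular and $h$ is small), the forced curve $q_{k-1,k}$ coincides with the nonholonomic trajectory connecting these two points, and similarly for $q_{k,k+1}$. Therefore
\[
\dot q_{k-1,k}(h)=R_{h,nh}^{e+}(q_{k-1},q_k),\qquad \dot q_{k,k+1}(0)=R_{h,nh}^{e-}(q_k,q_{k+1}),
\]
both of which are vectors in $\D_{q_k}$. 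Substituting these identities, \eqref{MLAnh} becomes precisely
\[
\F L\circ R_{h,nh}^{e+}(q_{k-1},q_k)-\F L\circ R_{h,nh}^{e-}(q_k,q_{k+1})\in \D^{o}_{q_k},
\]
together with the constraint $(q_k,q_{k+1})\in \M_h^{e,nh}$. This is exactly the alternative formulation \eqref{nhintegrator} of the exact discrete nonholonomic equations.

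Finally, to obtain the momentum-matching form \eqref{momemtummatch}, I would apply the projection $i_\D^*\colon T^*Q\to \D^*$ to the displayed relation and use \eqref{legendre}, namely $i_\D^*\circ \F L\circ i_\D=\F l$, noting that $\ker i_\D^*=\D^{o}$ so that the $\D^{o}_{q_k}$-membership above becomes an equality in $\D^*_{q_k}$ after projection, yielding $\F^{+}_{h,nh}l(q_{k-1},q_k)=\F^{-}_{h,nh}l(q_k,q_{k+1})$. Conversely, a difference that becomes zero in $\D^*_{q_k}$ after $i_\D^*$ must already lie in $\D^{o}_{q_k}$, which proves the reverse implication. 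The main subtlety I anticipate is the careful verification that, for $h$ small enough, the forced trajectory through two points of $\M_h^{e,nh}$ is actually the nonholonomic trajectory; this rests on the regularity of both systems and on Theorem \ref{Sodeexp}, which guarantees unique recovery of the initial velocity from the endpoints.
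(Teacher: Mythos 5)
Your proof is correct and follows essentially the same route as the paper's: rewrite the left-hand side of \eqref{MLAnh} as the difference of forced discrete Legendre transforms, apply Lemma \ref{exactlegendre}, use $\Gamma_{(L,\widetilde{F_{nh}})}\vert_{\D}=\Gamma_{nh}$ to identify the forced retractions with $R^{e\pm}_{h,nh}$ on $\M_h^{e,nh}$, and then pass to $\D^*$ via $i_\D^*$ and \eqref{legendre}. The only (cosmetic) difference is in the last equivalence: you invoke $\ker i_\D^* = \D^o$ to get both implications at once, whereas the paper handles the direction from \eqref{momemtummatch} by using that $\F l$ is a (local) diffeomorphism to conclude $R^{e+}_{h,nh}(q_{k-1},q_k)=R^{e-}_{h,nh}(q_k,q_{k+1})$; your version is, if anything, slightly more economical.
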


\begin{proof}
	The terms appearing in equations \eqref{MLAnh} are the restriction to $\M_{h}^{e,nh}$ of the exact discrete Legendre transformations for the forced system $(Q,L,\widetilde{F_{nh}})$:
	\begin{equation*}
	 \begin{split}
	 & \F^{f+}L_{d,\widetilde{F_{nh}}}^{e,h}(q_{k-1},q_{k})-\F^{f-}L_{d,\widetilde{F_{nh}}}^{e,h}(q_{k},q_{k+1}) \in \D^{o}_{q_k} \\
	 & (q_{k},q_{k+1})\in \M_{h}^{e,nh}, \ 0\leqslant k \leqslant N-1.
	 \end{split}
	 \end{equation*}
	 Thus, using Lemma \ref{exactlegendre}, the equations above are equivalent to
	 \begin{equation}\label{auxMLA}
	 \begin{split}
	 & \F L \circ R_{h, \widetilde{F_{nh}}}^{e,+}(q_{k-1},q_{k})-\F L \circ R_{h,\widetilde{F_{nh}}}^{e,-} (q_{k},q_{k+1}) \in \D^{o}_{q_k} \\
	 & (q_{k},q_{k+1})\in \M_{h}^{e,nh}, \ 0\leqslant k \leqslant N-1.
	 \end{split}
	 \end{equation}
	 Observe that, since the restriction of the forced dynamics to $\D$ matches the nonholonomic dynamics, then also the restriction of the forced retractions maps to $\M_{h}^{e,nh}$ matches the nonholonomic retraction maps $R_{h,nh}^{e,\pm}$.
	 
	 Now, if the sequence $(q_{0},...,q_{N})$ satisfies equations \eqref{momemtummatch}, then, since $\F L$ is a diffeomorphism one has that
	 \begin{equation*}
	 \begin{split}
	 & R_{h, nh}^{e,+}(q_{k-1},q_{k})=R_{h, nh}^{e,-} (q_{k},q_{k+1}) \\
	 & (q_{k},q_{k+1})\in \M_{h}^{e,nh}, \ 0\leqslant k \leqslant N-1,
	 \end{split}
	 \end{equation*}
	 and therefore equations \eqref{auxMLA} is trivially satisfied.
	 
	 Conversely, if the sequence $(q_{0},...,q_{N})$ satisfies equations \eqref{auxMLA}, then projecting by $i_{\D}^{*}$ we obtain \eqref{momemtummatch}.
\end{proof}

Observe that, we are restricting to pairs of points in $\M_d^{e,nh}$ and applying the modified Lagrange-d'Alembert principle 
\begin{eqnarray*}
	&&dS_{d} (q_{d})\cdot \delta q_{d} +\sum_{k=1}^{N-1}  \left[(\widetilde{F_{nh}})^{e,+}_d (q_{k-1}, q_k)+
	(\widetilde{F_{nh}})^{e,-}_d (q_{k}, q_{k+1})\right]\delta q_k=0\\
	&&(q_{k}, q_{k+1})\in \M_h^{e,nh},
\end{eqnarray*}
with $\delta q_{d}=(\delta q_{0},...,\delta q_{N})$ for all variations $\delta q_k\in \D_{q_k}$ verifying $\delta q_0=\delta q_N=0$ and
\begin{equation*}
	S_{d} (q_{d})=\sum_{k=0}^{N-1} L_{d,\widetilde{F_{nh}}}^{e,h}(q_{k},q_{k+1}).
\end{equation*}

\subsection{Construction of non-holonomic integrators}

To construct variational integrators we consider  discretizations $(L_d, F_d^-, F_d^+)$ of 
$(L_{d,\widetilde{F_{nh}}}^{e,h}, (\widetilde{F_{nh}})^{e,-}_d, (\widetilde{F_{nh}})^{e,+}_d)$ as a typical forced integrator and then we consider a discretization  $\M_{h}^{d}$ of $\M^{e,nh}_{h}$ to derive the {\it modified discrete Lagrange-d'Alembert equations}:  
\begin{equation}\label{MLAnh-d}
\begin{split}
& D_2 L_d(q_{k-1},q_{k})+D_1 L_d (q_{k},q_{k+1})+F^+_d(q_{k-1},q_{k})+F^-_d(q_{k},q_{k+1}) \in \D^{o}_{q_k} \\
& (q_{k},q_{k+1})\in \M_h^d, \ 0\leqslant k \leqslant N-1,
\end{split}
\end{equation}

We remark that \eqref{MLAnh-d} is equivalent to the projection onto $\D^{*}$, i.e.,
\begin{equation}\label{MLAnh-d2}
\begin{split}
& i_{\D}^{*}\left( D_2 L_d(q_{k-1},q_{k})+D_1 L_d (q_{k},q_{k+1})+F^+_d(q_{k-1},q_{k})+F^-_d(q_{k},q_{k+1}) \right)=0 \\
& (q_{k},q_{k+1})\in \M_h^d, \ 0\leqslant k \leqslant N-1,
\end{split}
\end{equation}
This projection motivates the definition of the Legendre transforms $\F^{\pm}l_{d}:\M^{d}_h\rightarrow \D^{*}$ given by
\begin{equation*}
\begin{split}
& \F^{+}l_{d}=i_{\D}^{*}\circ \F^{f+} L_{d}|_{\M^{d}_h} \\
& \F^{-}l_{d}=i_{\D}^{*}\circ \F^{f-} L_{d}|_{\M^{d}_h}.
\end{split}
\end{equation*}

\begin{example}
	Consider once more the nonholonomic particle. We introduce a discretization of the discrete space $\M^{e,nh}_{h}$
	\begin{equation}\label{discrete:space}
	\M^{d}_h=\{ z_{1}=z_{0}+\left( \frac{y_1+y_0}{2} \right)(x_1-x_0) \},
	\end{equation}
	and a discrete Lagrangian
	\begin{equation*}
	L_{d}(x_0,y_0,z_0,x_1,y_1,z_1)=\frac{1}{2h}\left[ \left(x_1-x_0 \right)^2+\left( y_1-y_0 \right)^2 +\left( z_1-z_0 \right)^2 \right].
	\end{equation*}
	Moreover we need two discrete forces
	\begin{equation*}
		F^+_d(q_{0},q_{1})=\frac{2}{h}\frac{(x_1-x_0)(y_1-y_0)}{4+\left( y_1+y_0 \right)^{2}}\left( -\frac{y_1+y_0}{2}d x_1+d z_1 \right)
	\end{equation*}
	and
	\begin{equation*}
		F^{-}_d(q_{0},q_{1})=\frac{2}{h}\frac{(x_1-x_0)(y_1-y_0)}{4+\left( y_1+y_0 \right)^{2}}\left( -\frac{y_1+y_0}{2}d x_{0}+d z_{0} \right).
	\end{equation*}
	The forced discrete Legendre transforms which appear also in the modified Lagrange-d'Alembert equations are
	\begin{equation*}
	\begin{split}
	\F^{f-} L_{d}(q_{0},q_{1})=\left( \frac{x_1-x_0}{h}+\frac{1}{h}\frac{(x_1-x_0)(y_1-y_0)(y_1+y_0)}{4+\left( y_1+y_0 \right)^{2}} \right)dx_{0} \\
	+\frac{y_1-y_0}{h}dy_{0} + \left( \frac{z_1-z_0}{h}-\frac{2}{h}\frac{(x_1-x_0)(y_1-y_0)}{4+\left( y_1+y_0 \right)^{2}} \right)dz_{0}
	\end{split}
	\end{equation*}
	and
	\begin{equation*}
	\begin{split}
	\F^{f+} L_{d}(q_{0},q_{1})=\left( \frac{x_1-x_0}{h}-\frac{1}{h}\frac{(x_1-x_0)(y_1-y_0)(y_1+y_0)}{4+\left( y_1+y_0 \right)^{2}} \right)dx_{1} \\
	+\frac{y_1-y_0}{h}dy_{1} + \left( \frac{z_1-z_0}{h}+\frac{2}{h}\frac{(x_1-x_0)(y_1-y_0)}{4+\left( y_1+y_0 \right)^{2}} \right)dz_{1}.
	\end{split}
	\end{equation*}
	Now projecting the forced Legendre transforms onto $\D^{*}$ by means of $i_{\D}^{*}$ and restricting to $\M^{d}_h$ we get
	\begin{equation*}
	\F^{-}l_{d}(q_{0}^{i},q_{1}^{a})=\frac{x_{1}-x_{0}}{h}\left( 1+\frac{1}{2}y_{0}(y_{1}+y_{0})+\frac{(y_{1}-y_{0})^{2}}{4+( y_1+y_0 )^{2}} \right)e^{1}+\left( \frac{y_1-y_0}{h} \right)e^{2}
	\end{equation*}
	and
	\begin{equation*}
	\F^{+}l_{d}(q_{0}^{i},q_{1}^{a})=\frac{x_{1}-x_{0}}{h}\left( 1+\frac{1}{2}y_{1}(y_{1}+y_{0})+\frac{(y_{1}-y_{0})^{2}}{4+( y_1+y_0 )^{2}} \right)e^{1}+\left( \frac{y_1-y_0}{h} \right)e^{2},
	\end{equation*}
	where the local frame $\{e^{a}\}\subseteq \D^{*}$ is dual to the local frame $\{e_{a}\}$ spanning $\D$, where $e_{1}=\frac{\partial}{\partial x}+y\frac{\partial}{\partial z}$ and $e_{2}=\frac{\partial}{\partial y}$.
	
	Now solving equations \eqref{MLAnh-d2} for this example we get
	\begin{equation*}
	\begin{split}
	& x_{2}=x_{1}+(x_{1}-x_{0})\frac{1+\frac{1}{2}y_{1}(y_{1}+y_{0})+\frac{(y_{1}-y_{0})^{2}}{4+( y_1+y_0 )^{2}}}{1+\frac{1}{2}y_{1}(3y_{1}-y_{0})+\frac{(y_{1}-y_{0})^{2}}{4+( 3y_1-y_0 )^{2}}} \\
	& y_{2}=2y_{1}-y_{0}.
	\end{split}
	\end{equation*}
	
	We can see in Figures \ref{fig:testa} and \ref{fig:testb} a comparison between the proposed integrator (MLA) and the more standard Discrete Lagrange-d'Alembert (DLA) integrator. We compare the error in both integrators as well as the energy behaviour of both. We observe the proposed integrator as good behaviour in both aspects and it even behaves slightly better than DLA. Notice that the Hamiltonian function $H|_{\D^{*}}$ given by
	\begin{equation*}
	H|_{\D^{*}}(x,y,z,p_{1},p_{2})=\frac{1}{2}\left( \frac{p_{1}^{2}}{1+y^{2}}+p_{2}^{2} \right)
	\end{equation*}
	becomes constant along the discrete flow, after the first steps. To run the simulation we set the initial position at the origin $q_{0}=0$ and $q_{1}=(0.4,0.4,z_{1})$, with $z_{1}$ being determined by \eqref{discrete:space}. The step is $h=0.5$ and the total number of steps is $N=1200$.
	
	\begin{figure}[htb!]
	\hspace{-1.5cm}	\includegraphics[width=1.2\linewidth]{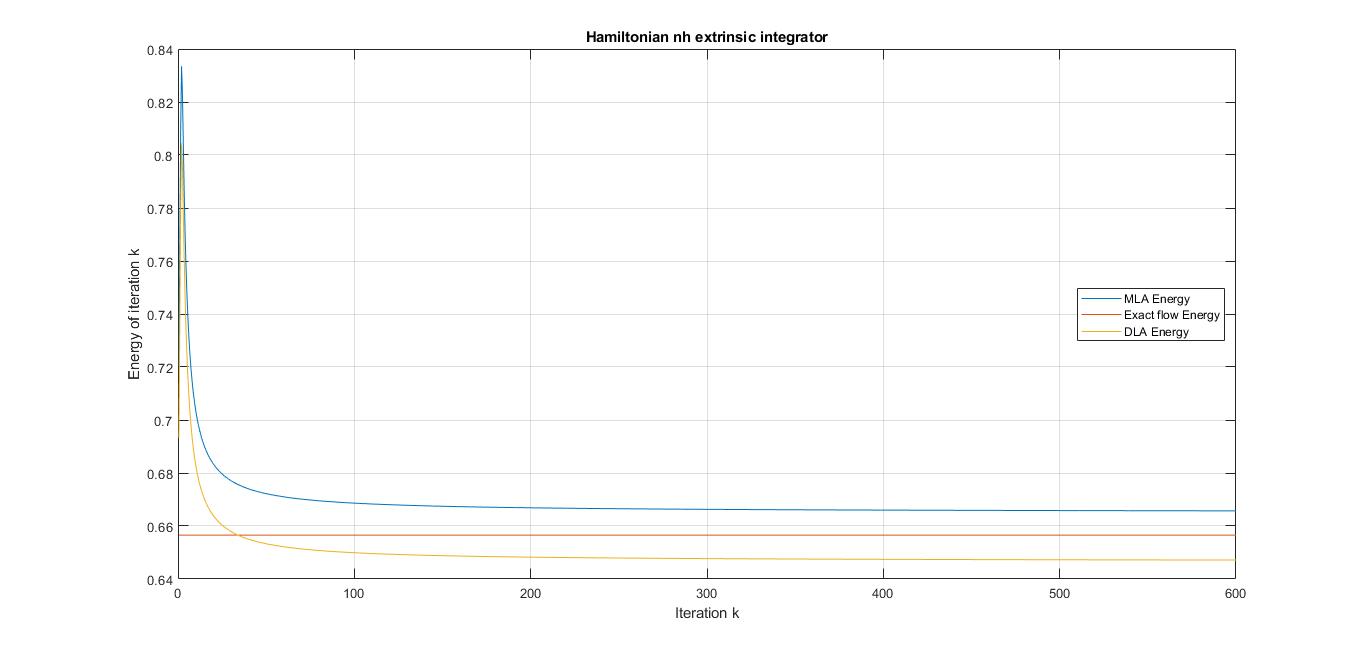}
		\caption{Comparison of the value of the Hamiltonian function between DLA and MLA integrators.}
		\label{fig:testa}
	\end{figure}
	\begin{figure}[htb!]
	\hspace{-1.5cm}	\includegraphics[width=1.2\linewidth]{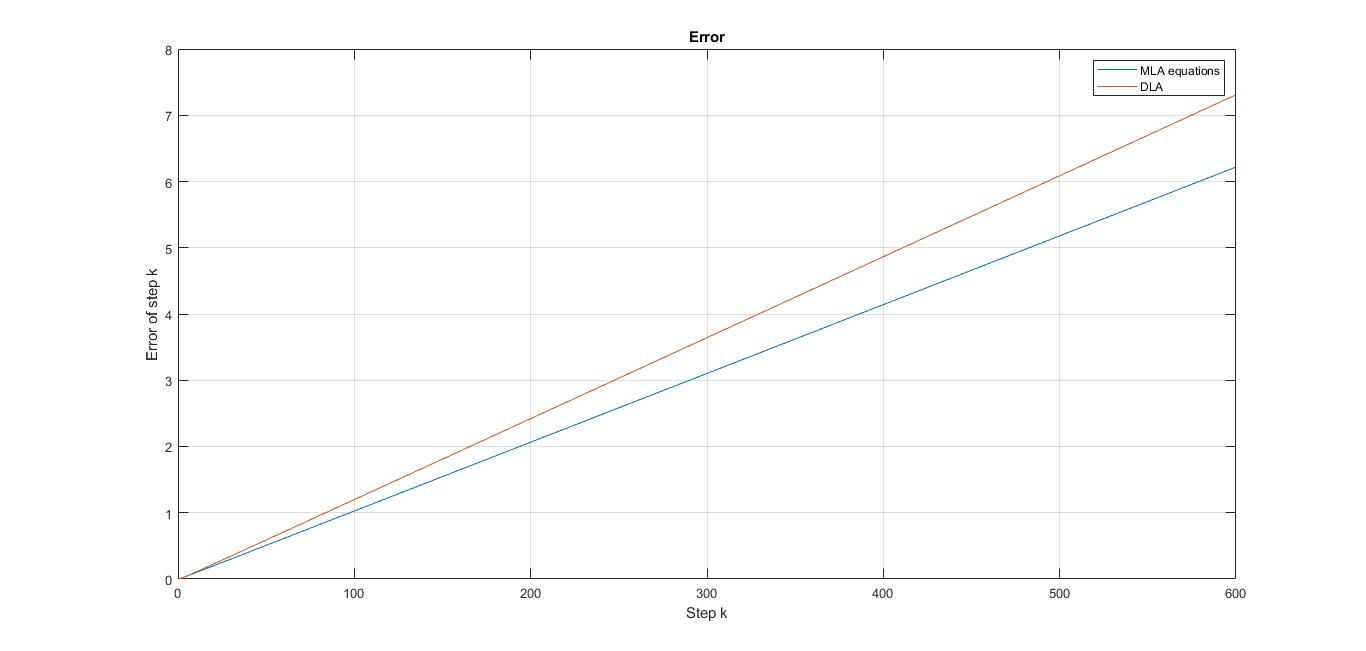}
		\caption{Evolution of the error in DLA and MLA integrators.}
		\label{fig:testb}
	\end{figure}
	
	We also draw in Figure \ref{fig:test2} the discrete constraint space $\M^{d}_h$ and compare it with its exact version $\M^{e,nh}_{h}$.
	\begin{figure}[htb!]
		\centering
		\begin{subfigure}{.5\textwidth}
			\begin{flushleft}
				\includegraphics[width=1.1\linewidth]{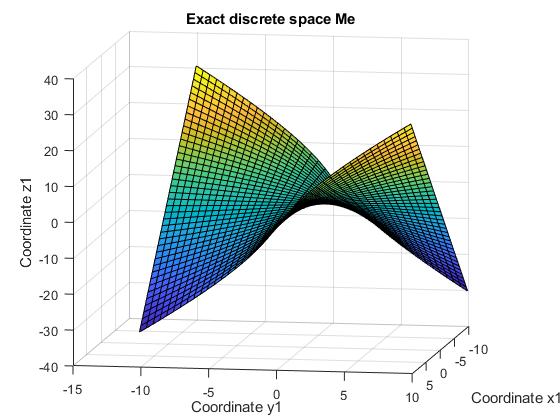}
				\caption{Exact discrete space $\M^{e,nh}_{h}$ given by \eqref{defdiscrete}.}
				\label{fig:sub1}
			\end{flushleft}
		\end{subfigure}%
		\begin{subfigure}{.5\textwidth}
			\centering
			\includegraphics[width=1.1\linewidth]{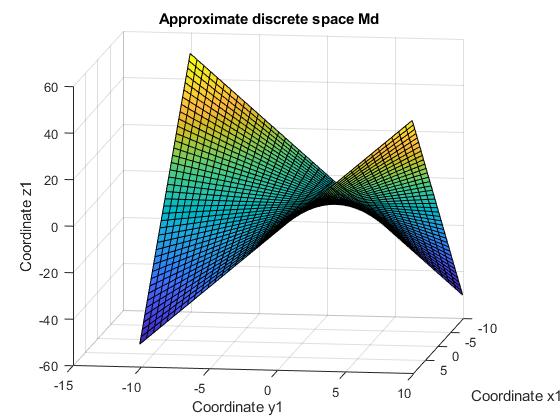}
			\caption{Discrete space $\M_{h}^{d}$ given by \eqref{discrete:space}.}
			\label{fig:sub2}
		\end{subfigure}
		\caption{Graph of the defining function for the respective spaces. We have fixed the origin as the initial point $q_{0}=0$ and plotted the coordinate $z_{1}$ as a function of $x_{1}$ and $y_{1}$.}
		\label{fig:test2}
	\end{figure}
\end{example}

\begin{example}
	Let us introduce another typical example of nonholonomic system (see \cite{Bloch}): the knife edge. Choosing appropriate constants, its Lagrangian function is described by the function $L:T(Q\times \Es^{1})\rightarrow \R$
	\begin{equation*}
	L(x,y,\varphi,\dot{x},\dot{y},\dot{\varphi})=\frac{1}{2}(\dot{x}^{2}+\dot{y}^{2}+\dot{\varphi}^{2})+\frac{x}{2},
	\end{equation*}
	and it is subjected to the nonholonomic constraint
	\begin{equation*}
		\sin(\varphi)\dot{x}-\cos(\varphi)\dot{y}=0.
	\end{equation*}
	We introduce the following discretization of the constraint space
	\begin{equation*}
		\M^{d}_h=\left\{\sin\left(\frac{\varphi_{1}+\varphi_{0}}{2}\right)\frac{x_{1}-x_{0}}{h}-\cos\left(\frac{\varphi_{1}+\varphi_{0}}{2}\right)\frac{y_{1}-y_{0}}{h}=0\right\}.
	\end{equation*}
	The natural discretization of the Lagrangian compatible with the above discrete constraint space is then
	\begin{equation*}
		L_{d}(x_{0},y_{0},\varphi_{0},x_{1},y_{1},\varphi_{1})=\frac{1}{2h}((x_{1}-x_{0})^{2}+(y_{1}-y_{0})^{2}+(\varphi_{1}-\varphi_{0})^{2})+h\cdot \frac{x_{1}+x_{0}}{4}
	\end{equation*}
	Moreover the discrete forces are given by
	\begin{equation*}
	F^+_d(q_{0},q_{1})=\frac{h}{2}\lambda \left( \mu_{x}d x_{1}+ \mu_{y} d y_{1} \right), \quad
	F^{-}_d(q_{0},q_{1})=\frac{h}{2}\lambda \left( \mu_{x}d x_{0}+ \mu_{y} d y_{0} \right),
	\end{equation*}
	with
	\begin{equation*}
		\begin{split}
			\lambda= & -\frac{\varphi_{1}-\varphi_{0}}{h^{2}}\left((x_{1}-x_{0})\cos\left(\frac{\varphi_{1}+\varphi_{0}}{2}\right)+(y_{1}-y_{0}) \sin\left(\frac{\varphi_{1}+\varphi_{0}}{2}\right)\right) \\
			& -\frac{1}{2}\sin\left(\frac{\varphi_{1}+\varphi_{0}}{2}\right)
		\end{split}
	\end{equation*}
	and
	\begin{equation*}
		\mu_{x}=\sin\left(\frac{\varphi_{1}+\varphi_{0}}{2}\right), \quad \mu_{y}=\cos\left(\frac{\varphi_{1}+\varphi_{0}}{2}\right).
	\end{equation*}
	With these ingredients we obtained an integrator with a nearly preservation of the energy (see Figure \ref{fig:test3}), where we use the Hamiltonian function
	\begin{equation*}
		H|_{\D^{*}}(x,\varphi,y,p_{1},p_{2})=\frac{1}{2}\left(\frac{p_{1}^2}{A(\varphi)}+p_{2}^2-x\right), \quad A(\varphi)=1+\frac{\sin^{2}(\varphi)}{\cos^{2}(\varphi)}.
	\end{equation*}
	\begin{figure}[htb!]
		\centering
		\includegraphics[width=0.7\linewidth]{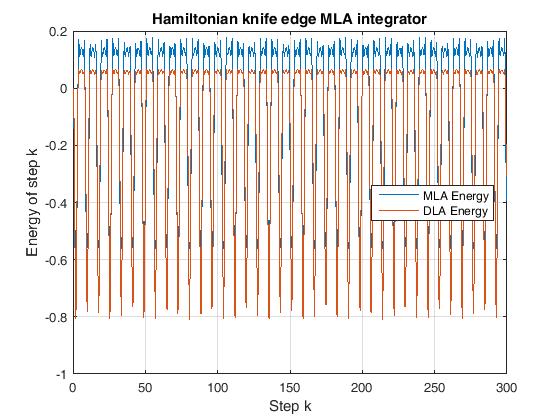}
		\caption{Experiment with the knife edge example: the initial positions are the origin $q_{0}=0$ and $q_{1}=(0.4,0.4,y_{1})$, the step is $h=0.5$ and the total number of steps is $N=600$.}
		\label{fig:test3}
	\end{figure}
\end{example}

\begin{example}
	We now slightly perturb the knife edge system by introducing the nonholonomic constraint (see \cite{modin})
	\begin{equation*}
	\sin(\varphi)\dot{x}-(\cos(\varphi)-\varepsilon)\dot{y}=0, \quad \varepsilon>0.
	\end{equation*}
	We obtain an integrator for the perturbed system that no longer preserves energy. Anyway, it still behaves clearly better than standard DLA algorithm (check Figure \ref{fig:test4}), for the Hamiltonian function
	\begin{equation*}
	H|_{\D^{*}}(x,\varphi,y,p_{1},p_{2})=\frac{1}{2}\left(\frac{p_{1}^2}{A(\varphi,\varepsilon)}+p_{2}^2-x\right), \quad A(\varphi,\varepsilon)=1+\frac{\sin^{2}(\varphi)}{(\cos(\varphi)-\varepsilon)^2}.
	\end{equation*}
	\begin{figure}[htb!]
		\centering
		\includegraphics[width=0.7\linewidth]{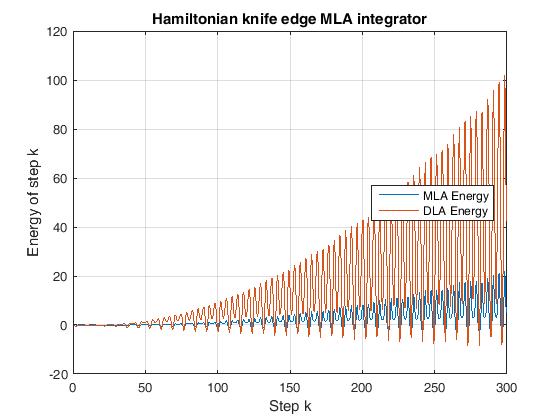}
		\caption{Experiment with the perturbed knife edge example with $\varepsilon=0.1$: the initial positions are the origin $q_{0}=0$ and $q_{1}=(0.4,0.4,y_{1})$, the step is $h=0.5$ and the total number of steps is $N=600$.}
		\label{fig:test4}
	\end{figure}
\end{example}

\section{Towards an intrinsic exact discrete flow}\label{towards}

We have recently introduced a formulation of nonholonomic mechanics using a suitable geometric environment, in this case, the skew-symmetric algebroid (cf. \cite{MR2492630,MR2660714}) which is a weaker version of the well-known concept of a Lie algebroid, where now the Lie bracket may not satisfy the Jacobi identity.

Following the program initiated by Alan Weinstein in \cite{MR1365779}, it was shown in \cite{MMdDM2006},\cite{MR3380059} and \cite{MMdDM2016} how to formulate  discrete mechanics in a unified way using the notion of a Lie groupoid. In the future, we want to find and study the equivalent algebraic structures for nonholonomic mechanics.

In this section we will describe some of the ingredients needed to develop this new theory, in particular, the nonholonomic exact discrete Lagrangian defined in $\M_{h}^{e,nh}$ and its main properties. 

Assume that we have a nonholonomic system defined by the triple $(Q, L, \D)$, where $L: TQ\rightarrow \R$ is a regular Lagrangian and $(L,\D)$ is a regular non-holonomic system. 

With the help of the \textit{constrained exact retraction}, defined by $R^{e-}_{h,nh}:\M_{h}^{e,nh}\rightarrow \U_h\subseteq \D$ introduced in Section \ref{sec:exponential-map}, we define  the \textit{nonholonomic exact discrete Lagrangian} for $(Q, L, \D)$ as a function on the exact discrete space $l_{d,nh}^{e,h}:\M_{h}^{e,nh}\rightarrow\R$ given by 
\begin{equation}
l_{d,nh}^{e,h}(q_{0},q_{1})=\int_{0}^{h} \left( L\circ \phi_{t}^{\Gamma_{nh}} \circ R^{e-}_{h,nh} \right) (q_0,q_1) \ dt.
\end{equation}
where $\{\phi_{t}^{\Gamma_{nh}}\}$ is the flow of $\Gamma_{nh}$, the solution of the nonholonomic dynamics.

To ease the notation let us introduce the following objects:
\begin{enumerate}
	\item given $(q_0, q_1)\in \M_{h}^{e,nh}$, define the following curves on $\D$ and $Q$, respectively: $$\gamma_{0}(t):=\left(\phi_{t}^{\Gamma_{nh}} \circ R^{e-}_{h,nh} \right) (q_0,q_1) \ \text{and} \ c_{0}(t):=\tau_{Q}\circ \gamma_{0}(t);$$
	\item a variation of the former curve is denoted by $$\gamma_{s}(t)=\left( \phi_{t}^{\Gamma_{nh}} \circ R^{e-}_{h,nh} \right) (q_0(s),q_1(s))\ \text{and} \ c_{s}(t):=\tau_{Q}\circ \gamma_{s}(t)$$
	\item the infinitesimal variation vector field on  the configuration manifold is  $$X_{0,1}(t)=\left. \frac{d}{ds} \right|_{s=0} c_{s}(t).$$
\end{enumerate}

Next we will prove a result which we will use later. The proof of this result involves the canonical involution $\kappa_{Q}:TTQ\rightarrow TTQ$ of the double tangent bundle. We recall that $\kappa_{Q}$ is a vector bundle isomorphism between the vector bundles $T\tau_{Q}:TTQ\rightarrow TQ$ and $\tau_{TQ}:TTQ\rightarrow TQ$. In fact, $\kappa_{Q}$ is characterized by the following condition if
\begin{equation*}
x:U\subseteq \R^{2} \rightarrow Q, \quad (s,t)\mapsto x(s,t)
\end{equation*}
is a smooth map then
\begin{equation*}
\kappa_{Q}\left( \frac{d}{dt}\frac{d}{ds} x(s,t) \right)=\frac{d}{ds}\frac{d}{dt} x(s,t).
\end{equation*}
So, $\kappa_{Q}^{2}=Id$. Moreover, if $X:Q\rightarrow TQ$ is a vector field on $Q$ then the tangent map $TX:TQ\rightarrow TTQ$ is a section of the vector bundle $T\tau_{Q}:TTQ\rightarrow TQ$ and, in addition, $\kappa_{Q}\circ TX=X^{C}$, where $X^{C}$ is the complete lift of $X$ to $TQ$ (see \cite{Tulcz} for more details).

\begin{lemma}\label{infinitesimal:lemma}
	Given a SODE $\Gamma$, if $\gamma_{s}$ is a one-parameter family of integral curves of $\Gamma$, then the infinitesimal variation vector field of $\gamma_{s}$ is the complete lift of the infinitesimal variation vector field of the one-parameter family of curves formed by the base integral curves of $\Gamma$, that is $c_{s}=\tau_{Q}\circ \gamma_{s}$.
\end{lemma}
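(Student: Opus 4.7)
The natural strategy is to exploit the fact that $\Gamma$ is a SODE, so its integral curves are tangent lifts of their projections, and then to swap the order of the two differentiations $\partial/\partial s$ and $\partial/\partial t$ using the canonical involution $\kappa_{Q}$.

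More concretely, I would define $x\colon (-\varepsilon,\varepsilon)\times I \to Q$ by $x(s,t)=c_{s}(t)$. Because $\Gamma$ is a SODE and $\gamma_{s}$ is an integral curve of $\Gamma$, we have
\begin{equation*}
\gamma_{s}(t)=\dot{c}_{s}(t)=\frac{\partial x}{\partial t}(s,t).
\end{equation*}
Hence the infinitesimal variation vector field of the family $\gamma_{s}$ is
\begin{equation*}
\left.\frac{d}{ds}\right|_{s=0}\gamma_{s}(t)
=\left.\frac{\partial}{\partial s}\right|_{s=0}\frac{\partial x}{\partial t}(s,t)\in T_{\gamma_{0}(t)}TQ.
\end{equation*}
Applying the characterizing property of the canonical involution $\kappa_{Q}\colon TTQ\to TTQ$ recalled just before the lemma, this equals
\begin{equation*}
\kappa_{Q}\!\left(\frac{d}{dt}\left.\frac{\partial x}{\partial s}\right|_{s=0}(t)\right)
=\kappa_{Q}\!\left(\frac{d}{dt}X_{0,1}(t)\right),
\end{equation*}
so the entire content reduces to identifying $\kappa_{Q}(\dot{X}_{0,1}(t))$ with a complete lift evaluated at $\gamma_{0}(t)$.

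For this, I would pick any vector field $X\in\mathfrak{X}(Q)$ extending $X_{0,1}$ along $c_{0}$ (such an extension exists locally, which suffices since the statement is pointwise along $\gamma_{0}$). Then $X_{0,1}(t)=X(c_{0}(t))$, and the chain rule yields
\begin{equation*}
\frac{d}{dt}X_{0,1}(t)=T_{c_{0}(t)}X\bigl(\dot{c}_{0}(t)\bigr)=TX(\gamma_{0}(t)).
\end{equation*}
Applying $\kappa_{Q}$ and using the identity $\kappa_{Q}\circ TX=X^{C}$ recalled before the lemma, I obtain
\begin{equation*}
\left.\frac{d}{ds}\right|_{s=0}\gamma_{s}(t)=X^{C}(\gamma_{0}(t)),
\end{equation*}
which is exactly the claim. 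To finish, I would note that the right-hand side depends only on $X_{0,1}$ as a vector field along $c_{0}$ (it uses only the value of $X$ at $c_{0}(t)$ and the derivative of $X$ in the direction $\dot{c}_{0}(t)$, both encoded in $X_{0,1}$ and $\dot{X}_{0,1}$), so the answer is independent of the chosen extension and the statement makes intrinsic sense.

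The only subtle point, and the step I expect to require most care, is the last one: making precise that the complete lift of the ``variation vector field $X_{0,1}$'' is well-defined along $\gamma_{0}$ even though $X_{0,1}$ a priori lives only along $c_{0}$. Once the identity $\kappa_{Q}\circ TX=X^{C}$ is invoked this is automatic, but it is worth spelling out so that the lemma can be used later without ambiguity.
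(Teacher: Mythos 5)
Your proposal is correct and follows essentially the same route as the paper's proof: use the SODE property to write $\gamma_{s}=\frac{d}{dt}c_{s}$, swap the $s$- and $t$-derivatives via the characterizing property of $\kappa_{Q}$, and conclude with the identity $\kappa_{Q}\circ TX=X^{C}$. The only difference is that you spell out the (local) extension of $X_{0,1}$ to a genuine vector field on $Q$ and check independence of the extension, a point the paper's proof leaves implicit; this is a welcome clarification but not a different argument.
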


\begin{proof}
	If $\gamma_{s}$ is a one-parameter family of integral curves of $\Gamma$, it has the form $\gamma_{s}=\frac{d}{dt}c_{s}$. Let $$X_{01}(t)=\left.\frac{d}{ds}\right|_{s=0}\tau_{Q}\circ\gamma_{s}(t)$$ be the infinitesimal variation vector field of $c_{s}$. Then the infinitesimal variation vector field of $\gamma_{s}$ is 
	\begin{eqnarray*}
		\left.\frac{d}{ds}\right|_{s=0}\gamma_{s}(t)&=&\left.\frac{d}{ds}\right|_{s=0}\frac{d c_{s}}{dt}(t)\\
		&=&\kappa_{Q} \left( \frac{d}{dt} \left.\frac{d}{ds}\right|_{s=0}c(s,t) \right)=\kappa_{Q} \left( \frac{d}{dt}X_{01}(t) \right)=X_{01}^{C}(t).
	\end{eqnarray*}
\end{proof}

Next, we will obtain an interesting expression for the differential of the nonholonomic exact discrete Lagrangian function $l_{d,nh}^{e,h}$. For this purpose, we will denote by $F_{nh}:\D\rightarrow T^{*}Q$ the continuous-time nonholonomic external force (see Remark \ref{rem1}).

\begin{proposition}\label{dis-lag}
	The differential of the nonholonomic exact discrete Lagrangian satisfies
	\begin{eqnarray*}
		\langle d l_{d,nh}^{e,h}(q_{0},q_{1}),(X_{q_0},X_{q_1}) \rangle &=&-\langle \beta_{nh}(q_{0},q_{1}),(X_{q_0},X_{q_1}) \rangle\\
		&&\hspace{-1cm}+\langle \F L\circ R_{h,nh}^{e+}(q_0, q_1),X_{q_1} \rangle- \langle\F L\circ R_{h,nh}^{e-}(q_0,q_1),X_{q_0} \rangle,
	\end{eqnarray*}
	where 
	$$\langle \beta_{nh}(q_{0},q_{1}), (X_{q_0}, X_{q_1})\rangle=\int_{0}^{h} \langle F_{nh}(\gamma_0(t)),X_{01}(t)\rangle \ dt$$ 	
	and we are identifying the vector $(X_{q_0},X_{q_1})\in T_{(q_0, q_1)}\M_{h}^{e,nh}$ with its image by $Ti:T\M_{h}^{e,nh}\hookrightarrow T(Q\times Q)$, with $i:\M_{h}^{e,nh}\hookrightarrow Q \times Q$ the canonical inclusion. The smooth curve $X_{01}: [0,h]\rightarrow TQ$ is defined as
	\[
	X_{01}(t)= T_{(q_0, q_1)}(\tau_{Q}\circ \phi_{t}^{\Gamma_{nh}} \circ R^{e-}_{h,nh}) (X_{q_0},X_{q_1})\; .
	\]
\end{proposition}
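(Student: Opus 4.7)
The plan is to differentiate $l_{d,nh}^{e,h}$ along a one-parameter family of variations inside $\M_{h}^{e,nh}$ and then integrate by parts, exploiting that the generating curve is a nonholonomic trajectory and hence, by Remark \ref{rem1}, a solution of the forced Euler-Lagrange equations with force $F_{nh}$.

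First I would pick a smooth curve $s\mapsto (q_0(s),q_1(s))$ in $\M_{h}^{e,nh}$ with velocity $(X_{q_0},X_{q_1})$ at $s=0$ and build the associated family of nonholonomic trajectories $\gamma_s(t)=(\phi_t^{\Gamma_{nh}}\circ R_{h,nh}^{e-})(q_0(s),q_1(s))$, with base curves $c_s(t)=\tau_Q\circ \gamma_s(t)$. Since $\Gamma_{nh}$ is a SODE, $\gamma_s(t)=\dot c_s(t)$; the chain rule then immediately gives the stated formula for $X_{01}(t)$. Using the two commutative diagrams from Section \ref{sec:exponential-map}, namely $\tau_{\D}\circ R_{h,nh}^{e-}=pr_1$ and $\tau_{\D}\circ R_{h,nh}^{e+}=pr_2$, I would read off the endpoint data: $c_s(0)=q_0(s)$ and $c_s(h)=q_1(s)$, hence $X_{01}(0)=X_{q_0}$ and $X_{01}(h)=X_{q_1}$, and also $\gamma_0(0)=R_{h,nh}^{e-}(q_0,q_1)$, $\gamma_0(h)=R_{h,nh}^{e+}(q_0,q_1)$.

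The core computation is then
$$\langle dl_{d,nh}^{e,h}(q_0,q_1),(X_{q_0},X_{q_1})\rangle=\int_{0}^{h}\frac{d}{ds}\bigg|_{s=0}L(c_s(t),\dot c_s(t))\,dt.$$
Expanding the integrand in local coordinates as $\tfrac{\partial L}{\partial q^i}X_{01}^i+\tfrac{\partial L}{\partial \dot q^i}\dot X_{01}^i$ evaluated along $\gamma_0$, and integrating the second term by parts, produces a boundary contribution $\bigl[\langle \F L(\gamma_0(t)),X_{01}(t)\rangle\bigr]_{0}^{h}$ plus the interior integrand $X_{01}^i\bigl(\tfrac{\partial L}{\partial q^i}-\tfrac{d}{dt}\tfrac{\partial L}{\partial \dot q^i}\bigr)(\gamma_0)$. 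Because $\gamma_0$ is a solution of the forced Euler-Lagrange equations for $(L,F_{nh})$, the interior integrand equals $-\langle F_{nh}(\gamma_0(t)),X_{01}(t)\rangle$, which assembles to $-\langle \beta_{nh}(q_0,q_1),(X_{q_0},X_{q_1})\rangle$. Substituting the endpoint identifications into the boundary term recovers exactly the two Legendre-transform summands in the statement.

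The only real subtlety is making the coordinate computation intrinsic. Alternatively, Lemma \ref{infinitesimal:lemma} identifies $\tfrac{d}{ds}|_{s=0}\gamma_s(t)$ with the complete lift $X_{01}^{C}(t)$ (after a local extension of $X_{01}$ to a vector field on $Q$, which is harmless since the result is pointwise and independent of the extension); the identity $X^{C}(L)-\tfrac{d}{dt}X^{V}(L)=\langle F_{nh},X\rangle$ from Remark \ref{rem1} then lets me integrate from $0$ to $h$ and obtain the same boundary term $\bigl[X^{V}(L)(\gamma_0(t))\bigr]_{0}^{h}=\bigl[\langle \F L(\gamma_0(t)),X_{01}(t)\rangle\bigr]_{0}^{h}$. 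No step is genuinely hard; the main bookkeeping obstacle is keeping the signs straight and ensuring that all manipulations respect the fact that the variation $(X_{q_0},X_{q_1})$ is only assumed tangent to $\M_{h}^{e,nh}$, which is automatic here since the map $(q_0,q_1)\mapsto \gamma_s(t)$ is only ever evaluated on $\M_{h}^{e,nh}$.
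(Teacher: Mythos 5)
Your proposal is correct and follows essentially the same route as the paper: vary the family of nonholonomic trajectories $\gamma_s(t)=(\phi_t^{\Gamma_{nh}}\circ R_{h,nh}^{e-})(q_0(s),q_1(s))$, identify the infinitesimal variation with the complete lift $X_{01}^{C}$ (Lemma \ref{infinitesimal:lemma}), use that $\gamma_0$ solves the forced Euler--Lagrange equations with force $F_{nh}$ to convert the interior term into $-\langle\beta_{nh},\cdot\rangle$, and read off the boundary terms from $\tau_{Q}\circ R_{h,nh}^{e\pm}=\mathrm{pr}_{2,1}$. The explicit coordinate integration by parts you carry out is just the unpacked version of the intrinsic identity the paper invokes, and your sign bookkeeping is consistent with the correct form of that identity.
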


\begin{proof}
	Let $v: (-s,s)\rightarrow \M_{h}^{e,nh}$ be a smooth curve denoted by $v(s)=(q_0(s),q_1(s))$ such that $v(0)=(q_0,q_1)\in \M_{h}^{e,nh}$ and $v'(0)=(X_{q_0},X_{q_1})\in T_{(q_0, q_1)}\M_{h}^{e,nh}$ and 
	$$\gamma_{s}(t)=\left( \phi_{t}^{\Gamma_{nh}} \circ R^{e-}_{h,nh} \right) (q_0(s),q_1(s)).$$
	Then, using Lemma \ref{infinitesimal:lemma}, we have that
	\begin{equation}
	\begin{split}
	\langle d l_{d,nh}^{e,h}(q_0,q_1), & \left.\frac{d}{d s}\right|_{s=0}(q_{0}(s),q_{1}(s)) \rangle = \\
	& = \int_{0}^{h}\langle dL(\gamma_{0}(t)),\left.\frac{d}{d s}\right|_{s=0} \gamma_{s}(t) \rangle dt\\
	& = \int_{0}^{h}\langle dL(\gamma_{0}(t)),X_{01}^{C}(t) \rangle dt.
	\end{split}
	\end{equation}
	Note that $X_{01}^{C}(t)$ is a vector field on $TQ$ along  $\gamma_{0}(t)$, hence using \eqref{LdA:vector:field} it follows that  
	\begin{equation}
	\begin{split}
	&	\langle d l_{d,nh}^{e,h}(q_0,q_1), (X_{q_0}, X_{q_1})\rangle
	= X_{01}^{V}(h)(L)-X_{01}^{V}(0)(L)-\int_{0}^{h} \langle F_{nh}(\gamma_0(t)),X_{01}(t) \rangle dt\\
	& = \langle\F L(\gamma_{0}(h)),X_{01}(h)\rangle-\langle\F L(\gamma_{0}(0)),X_{01}(0)\rangle-\int_{0}^{h} \langle F_{nh}(\gamma_0(t)),X_{01}(t) \rangle dt.
	\end{split}
	\end{equation}
	By unyielding the definition of $X_{01}$ and identifying $(X_{q_0},X_{q_1})$ with its image by $Ti:T\M_{h}^{e,nh}\hookrightarrow T(Q\times Q)$, we see that
	\begin{equation*}
	\begin{split}
	& X_{01}(h)= T_{(q_0, q_1)}(\tau_{Q}\circ  R^{e+}_{h,nh}) (X_{q_0}, X_{q_1})=X_{q_1}, \\
	& X_{01}(0)=T_{(q_0, q_1)}( \tau_{Q} \circ R^{e-}_{h,nh}) (X_{q_0},X_{q_1})=X_{q_0},
	\end{split}
	\end{equation*}
	since
	\begin{equation*}
	\tau_{Q}\circ R^{e+}_{h,nh}=\text{pr}_{2}|_{\M_{h}^{e,nh}} \quad \text{and} \quad \tau_{Q}\circ R^{e-}_{h,nh}=\text{pr}_{1}|_{\M_{h}^{e,nh}},
	\end{equation*}
	where $\text{pr}_{1,2}:Q\times Q\rightarrow Q$ are the projection onto the first and second factor, respectively.
\end{proof}

Observe that in the previous Proposition, the intrinsic discrete objects associated to the nonholonomic problem are $d l_{d,nh}^{e,h}$, $\beta_{nh}\in \Lambda^1\M_{h}^{e,nh}$. Then, $\sigma_{nh}$ given by
\begin{equation}\label{sigmanh}
\sigma_{nh} (X_{q_0}, X_{q_1})=\langle (\F L\circ R_{h,nh}^{e+})(q_0, q_1),X_{q_1} \rangle- \langle(\F L\circ R_{h,nh}^{e-})(q_0,q_1),X_{q_0} \rangle
\end{equation}
is also a 1-form in $\M_{h}^{e,nh}$, where $(X_{q_0},X_{q_1})$ is identified with its image by $Ti$. From the definition of the Legendre transform $\F L: TQ\rightarrow T^*Q$, it is easy to see that this map can be extended to a map
\[
\widetilde{\sigma_{nh}}: \M_{h}^{e,nh}\longrightarrow T^*(Q\times Q)
\]
defined by expression \eqref{sigmanh} but applying it to an arbitrary vector $(X_{q_0}, X_{q_1})\in T_{(q_0,q_1)}(Q\times Q)$ with $(q_0, q_1)\in \M_{h}^{e,nh}$.

Finally we will relate the exact discrete objects we use in the modified Lagrange-d'Alembert principle in Section \ref{discrete:nonholonomic} with the intrinsic exact discrete objects defined above.

\begin{proposition}	
	The restriction to $\M_{h}^{e,nh}$ of the forced exact discrete Lagrangian function $L_{d,\widetilde{F_{nh}}}^{e,h}$ is just the non-holonomic exact discrete Lagrangian function $l_{d,nh}^{e,h}$, that is,
	\begin{equation*}
	\left. L_{d,\widetilde{F_{nh}}}^{e,h} \right|_{\M_{h}^{e,nh}}=l_{d,nh}^{e,h}.
	\end{equation*}
	
	Moreover, if $(q_{0},q_{1})\in \M_{h}^{e,nh}$ and $(X_{q_0}, X_{q_1})\in T_{(q_{0},q_{1})}\M_{h}^{e,nh}$ then $$\langle((\widetilde{F_{nh}})^{e,-}_d(q_0, q_1), (\widetilde{F_{nh}})^{e,+}_d(q_0, q_1)), (X_{q_0}, X_{q_1})\rangle=\langle \beta_{nh}(q_0, q_1),  (X_{q_0}, X_{q_1})\rangle.$$
\end{proposition}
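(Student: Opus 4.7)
My plan is to reduce both claims to two structural facts about the extension already observed at the beginning of Section~\ref{discrete:nonholonomic-1}: (i) since $\widetilde{F_{nh}}:TQ\rightarrow T^*Q$ extends the nonholonomic force $F_{nh}:\D\rightarrow T^*Q$, the forced SODE $\Gamma_{(L,\widetilde{F_{nh}})}$ restricts to $\Gamma_{nh}$ on $\D$, so the forced flow started from a point of $\D$ remains in $\D$; and (ii) on $\D$, $\widetilde{F_{nh}}=F_{nh}$ by definition of extension.

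For Part 1, I would fix $(q_0,q_1)\in\M_h^{e,nh}$, set $v_0:=R^{e-}_{h,nh}(q_0,q_1)\in\U_h\subseteq\D$, and let $\gamma_0(t)=\phi_t^{\Gamma_{nh}}(v_0)$, whose base curve $c_0=\tau_Q\circ\gamma_0$ joins $q_0$ to $q_1$ in time $h$. By (i), $\gamma_0$ is also an integral curve of $\Gamma_{(L,\widetilde{F_{nh}})}$; by uniqueness of the forced boundary-value solution with data $(q_0,q_1)$, $c_0$ coincides with the curve $q_{0,1}$ used in the definition of $L_{d,\widetilde{F_{nh}}}^{e,h}$. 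This yields both $R^{e-}_{h,\widetilde{F_{nh}}}(q_0,q_1)=v_0$ (under the inclusion $\U_h\hookrightarrow TQ$) and the pointwise equality $L(q_{0,1}(t),\dot q_{0,1}(t))=L(\gamma_0(t))$, hence $L_{d,\widetilde{F_{nh}}}^{e,h}|_{\M_h^{e,nh}}=l_{d,nh}^{e,h}$.

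For Part 2, I would use bilinearity of the pairing together with the definitions of $(\widetilde{F_{nh}})_d^{e,\pm}$ in terms of the double exact discrete force $F_d^{e,h}$ to rewrite the left-hand side as
\[
\langle F_d^{e,h}(q_0,q_1),(X_{q_0},X_{q_1})\rangle=\int_0^h\bigl\langle\widetilde{F_{nh}}(q_{0,1}(t),\dot q_{0,1}(t)),\,X^f_{0,1}(t)\bigr\rangle\,dt,
\]
with $X^f_{0,1}(t):=T_{(q_0,q_1)}\bigl(\tau_Q\circ\phi_t^{\Gamma_{(L,\widetilde{F_{nh}})}}\circ R^{e-}_{h,\widetilde{F_{nh}}}\bigr)(X_{q_0},X_{q_1})$. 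By Part~1 the smooth maps $\tau_Q\circ\phi_t^{\Gamma_{(L,\widetilde{F_{nh}})}}\circ R^{e-}_{h,\widetilde{F_{nh}}}$ and $\tau_Q\circ\phi_t^{\Gamma_{nh}}\circ R^{e-}_{h,nh}$ coincide on $\M_h^{e,nh}$; since $(X_{q_0},X_{q_1})\in T_{(q_0,q_1)}\M_h^{e,nh}$, their tangent maps agree on this vector, so $X^f_{0,1}(t)=X_{01}(t)$. Finally (ii) gives $\widetilde{F_{nh}}(\gamma_0(t))=F_{nh}(\gamma_0(t))$ because $\gamma_0(t)\in\D$, turning the integral into $\int_0^h\langle F_{nh}(\gamma_0(t)),X_{01}(t)\rangle\,dt=\langle\beta_{nh}(q_0,q_1),(X_{q_0},X_{q_1})\rangle$.

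The only delicate step I anticipate is the tangent identification $X^f_{0,1}(t)=X_{01}(t)$: it crucially exploits that $(X_{q_0},X_{q_1})$ is tangent to $\M_h^{e,nh}$, since two smooth maps that agree on a submanifold have equal differentials only along vectors tangent to it. Everything else is a direct unpacking of the definitions, so no further technical obstacle is expected.
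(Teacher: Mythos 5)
Your proof is correct, and while Part~1 follows essentially the paper's own argument, Part~2 takes a genuinely different route. For Part~1 both you and the authors rest on the same observation: the trajectory of $\Gamma_{nh}$ joining $q_0$ to $q_1$ is also the unique trajectory of the forced system $(L,\widetilde{F_{nh}})$ joining them, so the two action integrals coincide. For Part~2 the paper does not return to the integral definitions at all: it combines Proposition~\ref{dis-lag}, which gives $dl_{d,nh}^{e,h}+\beta_{nh}=\sigma_{nh}$, with Part~1 (so that $dl_{d,nh}^{e,h}=i^{*}dL_{d,\widetilde{F_{nh}}}^{e,h}$) to obtain $\beta_{nh}=i^{*}\bigl(\widetilde{\sigma}_{nh}-dL_{d,\widetilde{F_{nh}}}^{e,h}\bigr)$, and then identifies the right-hand side with the pair $\bigl((\widetilde{F_{nh}})^{e,-}_d,(\widetilde{F_{nh}})^{e,+}_d\bigr)$ using Lemma~\ref{exactlegendre} and the definition of the forced discrete Legendre transforms. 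You instead compute directly: you expand $\langle F_d^{e,h}(q_0,q_1),(X_{q_0},X_{q_1})\rangle$ as an integral, observe that the maps $\tau_Q\circ\phi_t^{\Gamma_{(L,\widetilde{F_{nh}})}}\circ R^{e-}_{h,\widetilde{F_{nh}}}$ and $\tau_Q\circ\phi_t^{\Gamma_{nh}}\circ R^{e-}_{h,nh}$ coincide on $\M_h^{e,nh}$ so that their tangent maps agree on vectors tangent to $\M_h^{e,nh}$ (hence $X^f_{0,1}(t)=X_{01}(t)$), and use $\widetilde{F_{nh}}|_{\D}=F_{nh}$ to match integrands. The delicate step you flag --- that equality of differentials holds only along vectors tangent to the submanifold --- is handled correctly and is precisely where the hypothesis $(X_{q_0},X_{q_1})\in T_{(q_0,q_1)}\M_h^{e,nh}$ enters; note also that your use of linearity to reassemble $(X_{q_0},0)+(0,X_{q_1})$ is legitimate because $F_d^{e,h}(q_0,q_1)$ is a covector on all of $T_{(q_0,q_1)}(Q\times Q)$, even though the summands themselves need not be tangent to $\M_h^{e,nh}$. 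What each approach buys: yours is self-contained and elementary, needing neither Proposition~\ref{dis-lag} nor Lemma~\ref{exactlegendre}; the paper's is shorter given those results and makes transparent that the second identity is just the covector-level restatement of the first claim combined with the already-established variational formula for $dl_{d,nh}^{e,h}$.
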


\begin{proof}
	Given a pair of points $(q_0, q_1)\in \M_{h}^{e,nh}$, since the unique trajectory of $\Gamma_{nh}$ connecting the two points is also the unique trajectory of the forced problem $(L,\widetilde{F_{nh}})$ connecting these points, the expressions of $\left. L_{d,\widetilde{F_{nh}}}^{e,h} \right|_{\M_{h}^{e,nh}}$ and $l_{d,nh}^{e,h}$ match.
	
	According to Proposition \ref{dis-lag} and the observations following it we have that
	\begin{equation*}
	d l_{d,nh}^{e,h}+\beta_{nh}=\sigma_{nh}.
	\end{equation*}
	Then, since $\sigma_{nh}=i^{*}\widetilde{\sigma}_{nh}$ we have that
	\begin{equation*}
	i^{*}dL_{d,\widetilde{F_{nh}}}^{e,h}+\beta_{nh}=i^{*}\widetilde{\sigma}_{nh},
	\end{equation*}
	where $i:\M_{h}^{e,nh}\rightarrow Q\times Q$ is the inclusion. So,
	\begin{equation*}
	\beta_{nh}=i^{*}(\widetilde{\sigma}_{nh}-dL_{d,\widetilde{F_{nh}}}^{e,h}).
	\end{equation*}
	Observe that
	\begin{equation*}
	\widetilde{\sigma}_{nh}-dL_{d,\widetilde{F_{nh}}}^{e,h}=(-\F L \circ R_{h,nh}^{e,-}-D_{1}L_{d,\widetilde{F_{nh}}}^{e,h}, \F L \circ R_{h,nh}^{e,+}-D_{2}L_{d,\widetilde{F_{nh}}}^{e,h}).
	\end{equation*}
	Therefore, using Lemma \ref{exactlegendre}, we conclude
	\begin{equation*}
	\widetilde{\sigma}_{nh}-dL_{d,\widetilde{F_{nh}}}^{e,h}=((\widetilde{F_{nh}})^{e,-}_d, (\widetilde{F_{nh}})^{e,+}_d).
	\end{equation*}
\end{proof}
\section{Conclusion and future work}

In this paper, we have precisely identified the exact discrete equations for a nonholonomic system. The main ingredients were the definition of the exponential map for a constrained second-order differential equation allowing us to define the exact discrete  nonholonomic  constraint submanifold. Then, we define the main discrete elements that appear on the definition of the exact discrete nonholonomic equations. The special form of these equations allow us to introduce a new family of nonholonomic integrators showing in numerical computations the excellent behaviour of the energy.

In a future paper, we will study an  intrinsic version of discrete nonholomic mechanics in ${\M_{h}^{e,nh}}$ following the  steps given in Section \ref{towards}. Also we aim to find a nonholonomic version of Theorem \ref{variational-error} once we know  the exact discrete nonholonomic flow and the elements that it is necessary to approximate (discrete constraint submanifold, discrete Lagrangian and discrete forces) . Knowing these data we will be in a position describe  the order of the numerical method for a nonholonomic system as in the pure variational case. 
Moreover, since typically nonholonomic systems admit symmetries \cite{Bloch}, we will study the reduction of the discrete counterparts following the results by \cite{IMdDM2008}.

\section*{Acknowledgements}

D. Mart{\'\i}n de Diego and A. Simoes are supported  by I-Link Project (Ref: linkA20079), Ministerio de Ciencia e Innovaci\'on ( Spain) under grants  MTM\-2016-76702-P and ``Severo Ochoa Programme for Centres of Excellence'' in R\&D (SEV-2015-0554). A. Simoes is supported by the FCT research fellowship SFRH/BD/129882/2017. J.C. Marrero  acknowledges the partial support by European Union (Feder) grant PGC2018-098265-B-C32.

\bibliography{thesisreferences-1}{}

\end{document}